\journal{``''}
\newcommand{\todo}[1]{{\color{red} {\bf TODO:} #1}}
\newcommand{\todofrom}[2]{{\color{red} {\bf TODO from #1:} #2}}
\newcommand{\todofromto}[3]{{\color{red} {\bf TODO from #1 to #2:} #3}}
\newcommand{\mypng}[3]{ \myfig{#1}{#2}{png}{#3}{*} }
\newcommand{\textmultiset}[2]{\bigl(\!{\binom{#1}{#2}}\!\bigr)}
\newcommand{\displaymultiset}[2]{\left(\!{\binom{#1}{#2}}\!\right)}
\newcommand\multiset[2]{\mathchoice{\displaymultiset{#1}{#2}}
                                {\textmultiset{#1}{#2}}
                                {\textmultiset{#1}{#2}}
                                {\textmultiset{#1}{#2}}}
\newtheorem{thm}{Theorem}
\newtheorem{defn}{Definition}
\newtheorem{remk}{Remark}
\newtheorem{lem}{Lemma}
\newtheorem{prop}{Proposition}
\newcommand{\mc}{\mathcal}
\newcommand{\R}{\mathbb{R}}
\newcommand{\N}{\mathbb{N}}
\newcommand{\hs}{\hspace{1mm}}
\newcommand{\ar}{\rightsquigarrow}
\newcommand{\bc}{\textbf{C}}
\newcommand{\var}{\hat{\sigma}}
\newcommand{\s}{\text{star}}
\newcommand{\Sh}{\mathscr{S}}
\newcommand{\yt}{y_t}
\newcommand{\by}{\mathbf{y}}
\newcommand{\byt}{\by_t}
\newcommand{\Ht}{\bH_t}
\newcommand{\Gt}{\bG_{\Delta t}}
\newcommand{\bG}{\mathbf{G}}
\newcommand{\bH}{\mathbf{H}}
\newcommand{\bR}{\mathbf{R}}
\newcommand{\bW}{\mathbf{W}}
\newcommand{\bV}{\mathbf{V}}
\newcommand{\bresid}{\mathbf{e}}
\newcommand{\btheta}{\boldsymbol{\theta}}
\newcommand{\bepsilon}{\boldsymbol{\epsilon}}
\newcommand{\bomega}{\boldsymbol{\omega}}
\newcommand{\bsigma}{\boldsymbol{\sigma}}
\newcommand{\qt}{\btheta_t}
\newcommand{\qtmone}{{\btheta}_{t-\Delta t}}
\newcommand{\et}{\boldsymbol{\epsilon}_t}
\newcommand{\wt}{\boldsymbol{\omega}_t}
\newcommand{\tgps}{\epsilon_{\text{truck gps}, t}}
\newcommand{\vhfgps}{\epsilon_{\text{vhf gps}, t}}
\newcommand{\street}{\epsilon_{\text{streetsign}, t}}
\newcommand{\bgps}{\epsilon_{\text{bear gps},t}}
\newcommand{\vhfrange}{\epsilon_{\text{vhf},t}}
\begin{document}

\begin{frontmatter}

\title{A Sheaf Theoretical Approach to Uncertainty Quantification of
Heterogeneous Geolocation Information}

\rem{

\author{
	Cliff Joslyn\thanks{Pacific Northwest National Laboratory} \and
	Lauren Charles\footnotemark[1] \and
	Chris DePerno\thanks{North Carolina State University} \and
	Nicholas Gould\footnotemark[2] \and
	Kathleen Nowak\footnotemark[1] \and
	Brenda Praggastis\footnotemark[1] \and
	Emilie Purvine\footnotemark[1] \and
	Michael Robinson\thanks{American University} \and
	Jennifer Strules\footnotemark[2] \and
	Paul Whitney\footnotemark[1]
}

}

\author{Cliff Joslyn\corref{cor1}}
\author{Brenda Praggastis, Emilie Purvine}
\address{Pacific Northwest National Laboratory, Seattle, Washington}
\cortext[cor1]{Corresponding author: {\tt cliff.joslyn@pnnl.gov}}

\author{Lauren Charles, Kathleen Nowak, Paul Whitney}
\address{Pacific Northwest National Laboratory, Richland, Washington}

\author{Chris DePerno, Nicholas Gould, Jennifer Strules}
\address{North Carolina State University, Durham, North Carolina}

\author{Michael Robinson}
\address{American University, Washington, DC}

\rem{

\author[pnnlman]{Pacific Northwest National Laboratory}
\ead{cliff.joslyn@pnnl.gov}

\ead{cliff.joslyn@pnnl.gov}

\address[pnnlmain]{PNNL Seattle}

}

\rem{

\author{Chris DePerno\fnref{ncsu}}

\author{Nicholas Gould\fnref{ncsu}}
\author{Kathleen Nowak\fnref{pnnl}}
\author{Brenda Praggastis\fnref{pnnl}}
\author{Emilie Purvine\fnref{pnnl}}
\author{Michael Robinson\fnref{au}}
\author{Jennifer Strules\fnref{ncsu}}
\author{Paul Whitney\fnref{pnnl}}

\fntext[ncsu]{North Carolina State University}
\fntext[au]{American University}

}



\begin{abstract}

Integration of multiple, heterogeneous sensors is a challenging
problem across a range of applications. Prominent among these are
multi-target tracking, where one must combine observations from
different sensor types in a meaningful and efficient way to track
multiple targets. Because different sensors have differing error models,
we seek a theoretically-justified quantification of the agreement
amongst ensembles of sensors, both overall for a sensor collection,
and also at a fine-grained level specifying pairwise and multi-way
interactions amongst sensors.  We demonstrate that the theory of
mathematical sheaves provides a unified answer to this need,
supporting both quantitative and qualitative data.  Furthermore, the
theory provides algorithms to globalize data across the network of
deployed sensors, and to diagnose issues when the data do not
globalize cleanly. We demonstrate and illustrate the utility of
sheaf-based tracking models based on experimental data of a wild
population of black bears in Asheville, North Carolina. A measurement
model involving four sensors deployed amongst the bears and the team of
scientists charged with tracking their location is deployed. This
provides a sheaf-based integration model which is small enough to
fully interpret, but of sufficient complexity to demonstrate the
sheaf's ability to recover a holistic picture of the locations and
behaviors of both individual bears and the bear-human tracking
system. A statistical approach was developed in parallel for
comparison, a dynamic linear model which was estimated using a Kalman
filter. This approach also recovered bear and human locations and
sensor accuracies. When the observations are normalized into a common coordinate system, the structure of the dynamic linear observation
model recapitulates the structure of the sheaf model, demonstrating
the canonicity of the sheaf-based approach.  But when the observations are not so normalized, the sheaf model still remains valid.

{\bf DECLARATIONS OF INTEREST:} None

\end{abstract}

\begin{keyword}

Topological sheaves; information integration; consistency radius; wildlife management; stochastic linear model; Kalman filter

\end{keyword}




\end{frontmatter}

\section{Introduction}

There is a growing need for a representational framework with the ability to assimilate heterogeneous data.  Partial information about a single event may be delivered in many forms.  For instance, the location of an animal in its environment may be described by its GPS coordinates, citizen-scientist camera pictures, audio streams triggered by proximity sensors, and even Twitter text.  To recover a holistic picture of an event, one must combine these pieces in a meaningful and efficient way. Because different observations have differing error models, and may even be incompatible, a model capable of integrating heterogeneous data forms has become critically necessary to satisfy sensor resource constraints while providing adequate data to researchers.

Fundamentally, we seek a theoretically-justified, detailed quantification of the agreement amongst ensembles of sensors.  From this, one can identify collections of self-consistent sensors, and determine when or where certain combinations of sensors are likely to be in agreement.  This quantification should support both quantitative data and qualitative data, and continue to be useful regardless of whether the sampling rate is adequate for a full reconstruction of the scene.

Working in conjunction with recent advances in signal processing
\cite{Robinson_TSP_Book,Robinson_sheafcanon}, this article shows that
mathematical \emph{sheaves} support an effective processing framework with
canonical analytic methods.  We ground our methodological discussion
with a field experiment, in which we quantified uncertainty for a
network of sensors tracking black bears in Asheville, North Carolina.
The experiment provided data exhibiting a variety of error models and
data types, both quantitative (GPS fixes and radio direction bearings)
and qualitative (text records).

With this experiment, we demonstrate that although the sheaf 
methodology requires careful modeling as a prerequisite, one obtains
fine-grained analytics that are automatically tailored  to the
specific sensor deployment and also to the set of observations.

Because there is a minimum of hidden state to be estimated, relationships
between sensors -- which collections of sensors yield consistent
observations -- can be found by inspecting the \emph{consistency filtration},
which is naturally determined by the model.
These benefits are largely a consequence of the fact that sheaves are
the {\em canonical} structure of their kind, so that any principled specification of
the interaction between heterogeneous data sources will provably
recapitulate some portion of sheaf theory \cite{Robinson_sheafcanon}.
In short, the theory provides an algorithmic way to globalize data,
and to diagnose issues when the data do not globalize cleanly.

This paper is organized as follows. 

\begin{itemize}

\item \sec{history} provides context around the history of  target
tracking methodologies and multi-sensor fusion, including the
particular role we are advocating specifically for sheaf methods. 

\item Then in \sec{ExpSetup} we describe the experimental setup for
our data collection effort regarding tracking black bears in
Asheville, North Carolina. This involved four sensors deployed amongst 
the bears, a number of ``dummy'' bear collars deployed at fixed
locations in the field, and team of scientists charged
with tracking their location. This experiment was designed
specifically to identify a real 
tracking task where multi-sensor integration is required, but where the system complexity is not  so large as to overwhelm the
methodological development and demonstration, while still being sufficiently 
complexity to demonstrate the value of
sheaf-based methods.

\item \sec{sec:sheaf_methodology} continues with a detailed
mathematical development of the sheaf-based tracking model. This
includes a number of features novel to information fusion models,
including the explicit initial attention to the complexity of sensor
interaction in the core sheaf models, but then the ability to equip these
models with uncertainty tolerance in the form of {\em approximate
sections}, and then to measure that both globally via a {\em
consistency radius} and also at a more fine-grained level in terms of
specific sensor dependencies in terms of a {\em consistency
filtration}.

\item \sec{results} shows a number of aspects of our models
interpreted through the measured data. First, overall results in terms
of  bear and dummy collars is provided; then results for a particular bear
collar are examined in detail; and finally results for a particular
time point in the measured data set is shown to illustrate the
consistency filtration in particular.

\item While our sheaf models are novel, we sought to compare them to a
more traditional modeling approach, introduced in
\sec{kalman}. Specifically, a statistical approach was developed to process the data once registered into a common coordinate system, a
dynamic linear model which was estimated using a Kalman filter. This
approach also recovered bear and human locations and sensor
accuracies. Comparisons in both form and results are obtained, which
demonstrate the role that sheaves as {\em generic} integration models
can play in conjunction with {\em specific} modeling approaches such
as these: as noted, all integration models will provably
recapitulate some portion of sheaf theory \cite{Robinson_sheafcanon}, even if they are not first registered into a common coordinate system.

\item We conclude in \sec{conclusion} with some general observations
and discussion.

\end{itemize}

\section{History and Context} \label{history}

Situations like wildlife tracking with a variety of sensors requires
the coordinated solution of a number of well-studied problems, most
prominently that of \emph{target tracking} and \emph{sensor fusion}.
Although the vast majority of the solutions to these problems that are
proposed in the literature are statistical in nature, the sample rates
available from our sensors are generally insufficient to guarantee
good performance.  It is for this reason that we pursued more
foundational methods grounded in the geometry and topology of
\emph{sheaves}.  These methods in turn have roots in \emph{topology} and \emph{category theory}.

\subsection{Target Tracking Methods}

Tracking algorithms have a long and storied history.  Several authors (most notably \cite{Luo_2014,Pulford_2005}) provide exhaustive taxonomies of tracker algorithms which include both probabilistic and deterministic trackers.  Probabilistic methods are typically based on optimal Bayesian updates \cite{DeGroot_2004} or their many variations, for instance \cite{Deming_2009,Efe_2004,hamid2015joint}.

Closer to our approach, trackers based on optimal network flow tend
to be extremely effective in video object tracking
\cite{Berclaz_2011,Butt_2013,Pirsiavash_2011}, but require high sample
rates.  A tracker based on network flows models targets as ``flowing'' along a network of detections.  The network is given weights to account for either the number of targets along a given edge, or the likelihood that some target traversed that edge.  (Our approach starts with a similar network of detections, but treats the weighting very differently.)  Network flow trackers are optimal when target behaviors are
probabilistic \cite{Perera_2006,Zhang_2008}, but no sampling rate
bounds appear to be available.  
There are a number of variations of the basic optimal network flow
algorithm, such as providing local updates to the flow once solved
\cite{Milan_2015}, using short, connected sequences of detections
as ``super detections'' \cite {wang2014tracklet}, or $K$-shortest paths,
sparsely enriched with target feature information \cite{Ben_2014}.

\subsection{Multi-Sensor Fusion Methods}

Data fusion is the task of forming an ``alliance of data originating from different sources'' \cite{Wald_1999}.  There are a number of good surveys discussing the interface between target tracking and data fusion -- such as \cite{Luo_2014,Pulford_2005,Yang_2011} -- and which cover both probabilistic and deterministic methods.  Various systematic experimental campaigns have also been described \cite{leal2015motchallenge,Solera_2015}.  Other authors have shown that fusing detections across sensors \cite{Bailer_2014,hall2004mathematical,Newman_2013,smith2006approaches} yields better coverage and performance.

Considerable effort is typically expended developing robust features,
although usually the metric for selecting features is pragmatic
rather than theoretical.  In all cases, though, the assumption is that
sample rates are sufficiently high.  For situations like our wildlife
tracking problem, this basic requirement is rarely met.  Furthermore, data fusion techniques that operate on quantitative data typically require that sensors be of the same type \cite{alparone2008multispectral,varshney1997multisensor,Zhang_2010}.  Some prerequisite spatial registration to a common coordinate system is generally required, especially if sensor types differ \cite{Dawn_2010,Guo_2008,Koetz_2007}.

The lack of a common coordinate system can make all of these
approaches rather brittle to changes in sensor deployment.  One
mitigation for this issue is to turn to a more foundational method, for example
so-called ``possibilistic'' information theory \cite{benferhat2006reasoning,benferhat2009fusion,crowley1993principles}.  Here one encodes sensor models as a set of propositions and rules of inference.  Data then determine the value of logical variables, from which inferences about the scene can be drawn.  Allowing the variables to be valued possibilistically -- as opposed to probabilistically -- supports a wider range of uncertainty models.  Although these methods usually can support heterogeneous collections of sensors, they do so without the theoretical guarantees that one might expect from homogeneous collections of sensors.  Although the workflow for possibilistic techniques is similar to ours, the key difference is that ours relies on \emph{geometry}.  Without the geometric structure, one is faced with combinatorial complexity that scales rapidly with the number of possible sensor outputs; this frustrates the direct application of logical techniques. 

\subsection{Sheaf Geometry for Fusion}

The typical target tracking and data fusion methods tend to defer modeling until after the observations are present.  If one were to reverse this workflow, requiring careful modeling before any observations are considered, then looser sampling requirements arise.  This is supported by the workflow we propose, which uses interlocking local models of consistency amongst the observations.  These interlocking local models are  canonically and conveniently formalized by \emph{sheaves}.  As we discuss in Section \ref{sec:sheaf_methodology}, a \emph{sheaf} is a precise specification of which sets of local data can be fused into a consistent, more global, datum \cite{malcolm2009sheaves}.  While the particular kind of sheaves we exploit in this article have been discussed sporadically in the mathematics literature \cite{Baclawski_1975,Baclawski_1977,Curry,Lilius_1993,RobinsonQGTopo,Shepard_1985}, our recent work cements sheaf theory to practical application.

Sheaves are an effective organization tool for heterogeneous sensor deployments \cite{Joslyn_2014}.  Since sheaves require modeling as a \emph{prerequisite} before any analysis occurs, encoding sensor deployments as a sheaf can be an obstacle.  Many sheaf encodings of standard models (such as dynamical systems, differential equations, and Bayesian networks) have been catalogued \cite{Robinson_multimodel}.  Furthermore, these techniques can be easily applied in a number of different settings, for instance in air traffic control \cite{Mansourbeigi_2017,mansourbeigi2018sheaf} and in formal semantic techniques \cite{zadrozny2018sheaf}.
Sheaf-based techniques for fundamental tasks in signal processing have
also been developed \cite{Robinson_TSP_Book}.  

The most basic data fusion
question addressed by a sheaf is whether a complete set of
observations consistitutes a \emph{global section}, which is a
completely consistent, unified state \cite{Purvine_HCII}.  The
interface between the geometry of a sheaf and a set of observations
can be quantified by the \emph{consistency radius}
\cite{Robinson_sheafcanon}.  Practical algorithms for data fusion
arise simply as minimization algorithms for the consistency radius,
for instance \cite{Capraro_2018}.  In this article, as in the more
theoretical treatment \cite{Robinson_assignments}, we show that
sheaves provide additional, finer-grained analysis of consistency
through the \emph{consistency filtration}.

The potential significance of sheaves has been recognized for some time in the formal modeling community \cite{goguen1992sheaf,malcolm2009sheaves, nelaturi2016combinatorial} in part because they bridge between two competing foundations for mathematics -- logic and categories \cite{Goldblatt}.  The study of categories is closely allied with formal modeling of data processing systems \cite{Kokar_2004,Kokar_2006,Spivak_2014}.

Much of the sheaf literature focuses on studying models in the absence of observations.  (This is rather different from our approach, which exploits observations.)  For sheaf models that have enough structure, \emph{cohomology} is a technical tool for explaining how local observations can or cannot be fused.  This has applications in several different disciplines, such as network structure \cite{ghrist2011network, nguemo2017sheaf,RobinsonQGTopo} and quantum information \cite{abramsky2015contextuality,abramsky2011sheaf}.  Computation of cohomology is straightforward \cite{Robinson_TSP_Book}, and efficient algorithms are now available \cite{curry2016discrete}.

\section{Tracking Experiment}\label{ExpSetup}

Researchers worked
collaboratively to adapt existing practices around bear management to
an experimental setup designed specifically to capture the properties
of the targeted problems in a tracking task where multi-sensor
integration in required. The data gathered in this way  directly support
the initial sheaf models developed in \sec{sec:sheaf_methodology} with
an optimal level of complexity and heterogeneity. In particular, what
was sought is the inclusion of two targets (a bear or dummy collar,
and then the wildlife tracker) jointly engaged in a collection of
sensors of heterogeneous types (including GPS, radiocollars, and
hand-written reports).

\subsection{Black Bear Study Capture and Monitoring Methodology}

The black bear observational study used in this paper is located in western North Carolina and centered on the urban/suburban area in and around the city of Asheville, North Carolina.  Asheville is a medium-sized city (117 km${}^2$) with approximately 83,000 people, located in Buncombe County in the southern Appalachian Mountain range (\cite{Kirk}, Fig. 1).

	Western North Carolina is characterized by variable mountainous topography (500--1800 m elevation), mild winters, cool summers, and high annual precipitation (130--200 cm/year), mostly in the form of rainfall.  Black bears occur throughout the Appalachian Mountains.  The major forest types include mixed deciduous hardwoods with scattered pine \cite{Kirk}, and pine-hardwood mix \cite{Mitchell}.

The Asheville City Boundary is roughly divided into four quadrants separated by two four-lane highways, Interstate 40, which runs east to west and Interstate 26, which runs north to south.  Interstate 240 is a 9.1-mile (14.6 km) long Interstate Highway loop that serves as an urban connector for Asheville and runs in a semi-circle around the north of the city's downtown district.  We define black bear capture sites and associated locational data as `urban/suburban' if the locations fall within the Asheville city limit boundary and locations outside the city boundary are considered rural.

\begin{figure}[H]
\centering
\includegraphics[scale=.8]{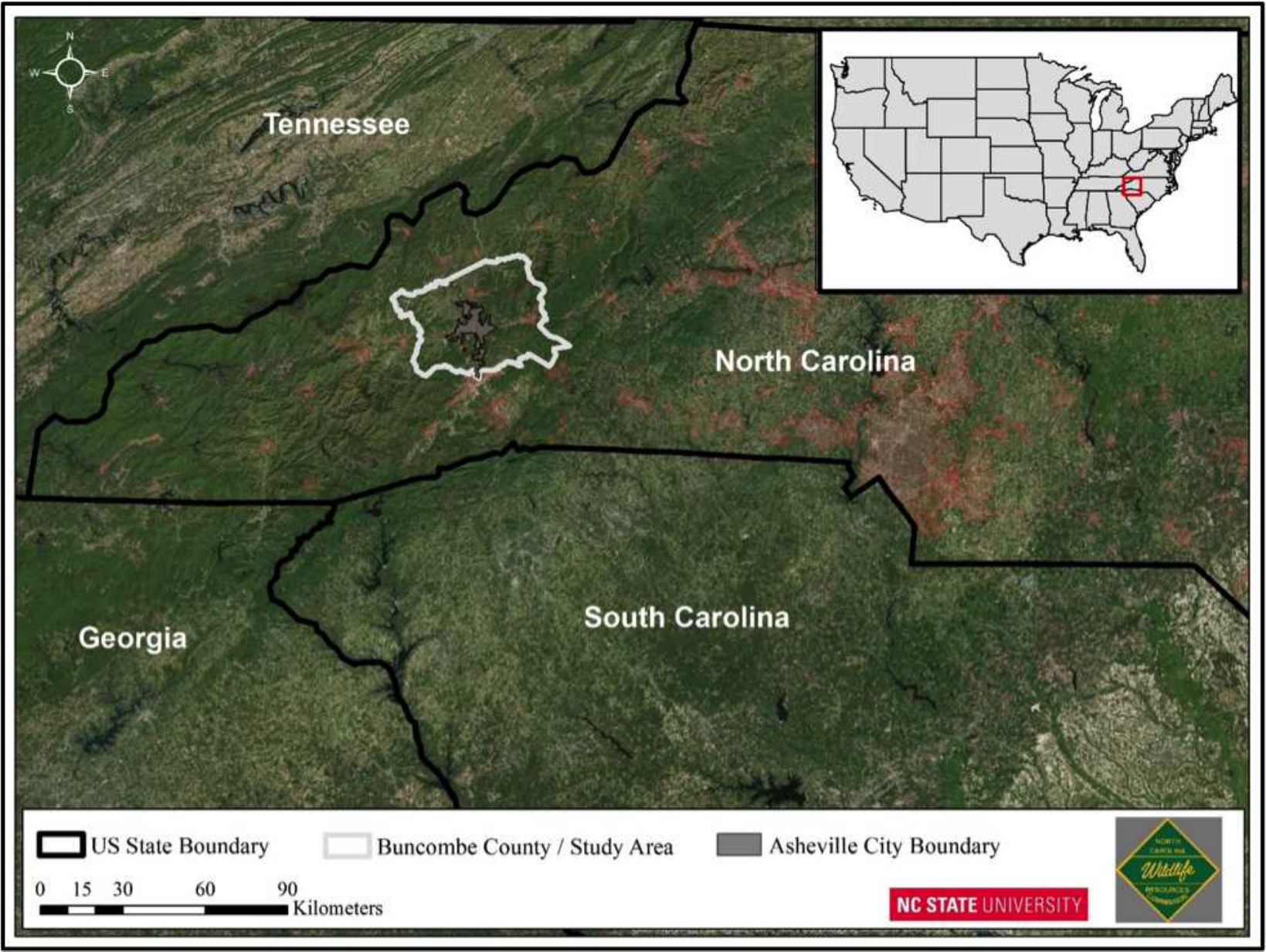}
\caption{NCSU/NCWRC urban/suburban black bear (Ursus americanus) study area, Asheville, North Carolina, USA}\label{NC}
\end{figure}

\rem{

\todofromto{Cliff}{NCSU}{Perhaps another map like a Google maps
highway-level map would be wise to indicate the extent of the study
area? See later in the paper, we've been doing this a lot, and can
provide one.}

}


\rem{

\todofromto{Cliff}{NCSU}{See note above, I think some more context
here about what's the ``normal'' thing to do vs.\ what's special for
us could be helpful.}

}

We used landowner reports of black bears on their property to target
amenable landowners to establish trap lines on or near their property, while attempting to maintain a spatially balanced sample across the city of Asheville. The sampling area included suitable plots within, or adjacent to, the city limits.  We set 10-14 culvert traps in selected locations that had documented bears near their properties.  We checked culvert traps twice daily, once in the morning between 0800 –- 1100 and again between 1830 –- 2130.

We baited traps with day old pastries.  Once captured, we immobilized bears with a mixture of ketamine hydrochloride (4.0 cc at 100mg/mL), xylazine hydrochloride (1.0 cc at 100mg/mL) and telazol (5 cc at 100 mg/mL) at a dose of 1cc per 100lbs.  We placed uniquely numbered eartags in both ears, applied a tattoo to the inside of the upper lip, removed an upper first premolar for age determination from all bears older than 12 months \cite{Willey}, and inserted a Passive Integrated Transponder tag (PIT tag) between the shoulder blades.  Additionally, we collected blood, any ectoparasites present, and obtained hair samples.  We recorded weight, sex, reproductive status (e.g., evidence of lactation, estrus, or descended testicles), morphology, date, and capture location for each bear.  We fitted bears with a global positioning system (GPS) radiocollar (Vectronics, Berlin Germany) that did not exceed 2-3\% of the animal’s body weight.  We administered a long-lasting pain-reliever and an antibiotic, and we reversed bears within approximately 60 minutes of immobilization with yohimbine hydrochloride (0.15 mg/kg).

Once black bears were captured and radiocollared, we used the virtual fence application on the GPS collars to obtain locational data every 15 minutes for individual bears that were inside the Asheville city limits (i.e., inside the virtual fence) and every hour when bears were outside the city limits. As a double-check, we monitored bears weekly with hand-held radio telemetry to ensure that the very high frequency (VHF) signal was transmitting correctly.  
We attempted this schedule on all bears to determine survival, proximity to roads and residential areas, home range size, habitat use, dispersal, and location of den sites.  We stratified the locations throughout the day and night and programmed the GPS collars to send an electronic mortality message via iridium satellite to North Carolina State University (NCSU) if a bear remained immobile for more than 12 hours.  We immediately investigated any collar that emitted a signal to determine if it was mortality or a ``slipped'' collar.  

\subsection{Tracking Exercise}

We used data from 12 GPS collars (6 stationary ``dummy'' collars and 6 ``active'' collars deployed on wild black bears) to design an urban tracking experiment to better estimate bear location and understand sources of error.  The ``dummy'' collars were hidden and their locations were spatially balanced across the city limits of Asheville and hung on tree branches at various private residences.  The active collars were deployed on three female black bears with the majority of their annual home range (i.e., locations) located inside the city limits.  Bears N068, N083, and N024 were ``tracked'' three, two and one time, respectively, for the experiment.

We conducted two-hour tracking session which included 16 observations (i.e., stops) on either the dummy collar or the active bear.  We stopped and recorded VHF detections on collars every 5-7 minutes along with data on the nearest landmark/road intersection (i.e., text description), Universal Transverse Mercator (UTM) position, elevation at each stop, compass bearing on the collar, handheld GPS position error for the observation point, and approximate distance to the collar.  Also, we recorded a GPS track log during the two-hour tracking session using a second handheld GPS unit located inside the vehicle; the vehicle GPS units logged continuous coordinates (approximately every 10 seconds) of our driving routes during the tracking session.  Finally, the GPS collars simultaneously collected locations, with associated elevation and GPS ``location error'', every 15 minutes for the duration of the two-hour tracking shift.
GPS collar data include date, time, collar ID, latitude/longitude, satellite accuracy (horizontal dilution of precision (HDOP)), fix type (e.g., 2D, 3D, or val. 3D), and elevation.

\rem{

For the tracking experiment, we monitored with hand-held radio-telemetry six stationary “dummy” GPS collars and six “active” collars deployed on wild black bears.  We conducted two-hour tracking session which included 16 observations (i.e., stops) on all dummy collars and active black bears. We recorded VHF detections on collars every 5-7 minutes along with locational data, and a GPS track log during the two-hour tracking session using a second handheld GPS unit located inside the vehicle. The vehicle GPS units logged continuous coordinates (approximately every 10 seconds) of our driving routes during the tracking session. Finally, the GPS collars simultaneously collected locations, with associated elevation and GPS “location error”, every 15 minutes for the duration of the two-hour tracking shift.

}

\section{Sheaf Modeling Methodology} \label{sec:sheaf_methodology}

We now introduce the fundamentals and details of our sheaf-based
tracking model of our heterogeneous information integration
problem. As noted above, our wildlife tracking problem is used as an
instantiation of our sheaf model intended to aim at the right level of
structural complexity to demonstrate the significance of the sheaf
model. As such, we introduce the mathematics below in close
conjunction with the example of the bear tracking problem itself, as
detailed above in \sec{ExpSetup}. For a more extensive sheaf theory
introduction, the reader is directed towards
\cite{Curry,Robinson_sheafcanon}.

A sheaf is a data structure used to store information over a
topological space.  The topological space describes the relations
among the sensors while the sheaf operates on the raw input data,
mapping all the sources into a common framework for comparison. In
this work, the required topological space is determined by an {\em
abstract simplicial complex} (ASC), a discrete mathematical object
representing not only the available sensors as vertices (in this case,
the four physical sensors involved, three on the human-vehicle tracking team, and one on the
bear itself), but also their multi-way {\em interactions} as
higher-order {\it faces}. Where the ASC forms the {\it base space} of the
sheaf, the {\em stalks} that sit on its faces hold the recorded
data. Finally, the sheaf model is completed by the specification of
{\em restriction functions} which model interactions of the data among
sensor combinations. In this context, we can then identify {\em
assignments} as recorded readings; {\em sections}
as assignments which are all consistent according to the sheaf model;
and {\em partial} assignments and sections correspondingly over some
partial collection of sensors.

We then introduce some techniques which are novel for sheaf
modeling. Specficially, where global or partial sections indicate data
which are {\em completely} consistent, we introduce {\em consistency
structures} to represent data which are only somewhat
consistent. While defined completely generally, consistency structures
instantiated to the bear model in particular take the form of $n$-way
standard deviations. Consistency structures in turn allow us to
introduce {\em approximate sections} which can measure the
degree of consistency amongst sensors. The {\em consistency radius} (more fully developed by Robinson \cite{Robinson_sheafcanon} in later work) 
provides a native global measure of the uncertainty amongst the sensors present in any
reading. Beyond that, the {\em consistency filtration} provides a
detailed breakdown of the contributions of particular sensors and
sensor combinations to that overall uncertainty.

\subsection{Simplicial Sheaf Models}

First we define the concept of an abstract simplicial complex, the type of topological space used to model our sensor network.

\begin{defn}

An \textbf{abstract simplicial complex} (ASC) over a finite base set $U$ is a
collection $\Delta$ of subsets of $U$, for which $\delta \in \Delta$
implies that every subset of $\delta$ is also in $\Delta$. We call
each $\delta \in \Delta$ with $d+1$ elements a \textbf{$d$-face} of
$\Delta$, referring to the number $d$ as its \textbf{dimension}. Zero
dimensional faces (singleton subsets of $U$) are called
\textbf{vertices}, and one dimensional faces are called
\textbf{edges}. For $\gamma,\delta \in \Delta$, we say that $\gamma$ is a
\textbf{face} of $\delta$ (written $\gamma \ar \delta$) whenever $\gamma$
is a proper subset of $\delta$.

\end{defn}

\begin{remk}

An abstract simplicial complex $\Delta$ over a base set $U$ with
$|U|=n$ can be
represented in $\R^n$ by mapping $u_i$ to $e_i$ and $\delta \in \Delta$ to
the convex hull of its points.

\end{remk}

An abstract simplicial complex can be used to represent the
connections within a sensor network as follows. Take the base set $U$
to be the collection of sensors for the network and take $\Delta$ to
include every collection of sensors that measure the same quantity.
For example, as described in Section \ref{ExpSetup} above, there are
four sensors used in the tracking experiment:

\begin{itemize}

\item The {\it GPS} reading on the Bear Collar, denoted $G$;

\item The $\textit{Radio VHF Device}$ receiver, denoted $R$;

\item The $\textit{Text}$ report, denoted $T$; and

\item The $\textit{Vehicle GPS}$, denoted $V$.

\end{itemize}

As shown in Table~\ref{matrix}, the bear collar GPS and Radio VHF
Device give the location of the bear, and the text report, vehicle
GPS, and Radio VHF Device give the location of the researcher.  We can
then let $U = \{V,R,T,G\}$ be our base sensor set. Then our ASC
$\Delta$, representing the tracking sensor network, contains the face
denoted $H=\{V,R,T\}$ as the set of sensors reading off in human position; the
face denoted  $B=\{R,G\}$ as those reading off on bear postion; and all their
subsets, so nine total faces:

\begin{eqnarray*}
\Delta = &\{&\{ V,R,T \}, \\
&&\{V,T\},\{R,T\},\{V,R\},\{R,G\},	\\
&&\{V\},\{R\},\{T\},\{G\} \quad \}
\end{eqnarray*}

\rem{

\begin{center}
\begin{eqnarray*}
\Delta = &&\{\{ \textit{Text}, \textit{Vehicle GPS},\textit{Radio VHF Device}\}, \\
&&\{ \textit{Text},\textit{Vehicle GPS}\},\{\textit{Text},\textit{Radio VHF Device}\},\\
&&\{\textit{Vehicle GPS},\textit{Radio VHF Device}\}, \{\textit{Radio VHF Device},\textit{Bear Collar}\}, \\
&&\{\textit{Text}\}, \{\textit{Vehicle GPS}\}, \{\textit{Radio VHF Device}\},\{\textit{Bear Collar}\}\}
\end{eqnarray*}
\end{center}

}

\begin{table}

\begin{center}

\begin{tabular}{l||c|c}
 & $H=$Human position & $B=$Bear position \\
\hline
\hline
$V=$Vehicle GPS $\tup{\h{lat,long,ft}}$ & $\surd$ \\
\hline
$T=$Text & $\surd$ \\
\hline
$R=$Receiver $\tup{\h{UTM N,UTM E,m,deg,m}}$ & $\surd$ & $\surd$ \\
\hline
$G$=GPS on Bear $\tup{\h{UTM N, UTM E,m}}$ & & $\surd$ \\
\end{tabular}

\rem{

\begin{tabular}{l||c|c}
 & $H=$Human position & $B$=Bear position \\
\hline
\hline
$V=$Vehicle GPS $\tup{\h{UTM,m}}$ & $\surd$ \\
\hline
$T=$Text $\tup{\h{UTM,m}}$ & $\surd$ \\
\hline
$R=$Receiver $\tup{\h{UTM,m,UTM}}$ & $\surd$ & $\surd$ \\
\hline
$G=$GPS on Bear $\tup{\h{UTM,m}}$ & & $\surd$ \\
\end{tabular}

}

\caption{Sensor matrix for the bear tracking.}
\label{matrix}

\end{center}

\end{table}

Denoting the remaining pairwise sensor interaction faces as
$X=\{V,T\},Y=\{V,R\}$, and $Z=\{R,T\}$, then the ASC $\Delta$ can be
shown graphically on the left side of \fig{attach_2002}. Here the
highest dimensional faces (the human and bear positions $H$ and $B$
respectively) are shown, with all the sub-faces labeled. The sensors
are the singleton faces (rows), and are shown in black; and the higher
dimensional faces (columns) in red. Note the presence of a solid
triangle to indicate the three-way interaction $H$; the presence of
the four two-way interactions as edges; and finally the presence of
each sensor individually.
The right side of \fig{attach_2002} shows the \textbf{attachment
diagram} corresponding to the ASC. This is a directed acyclic graph, where
nodes are faces of the ASC, connected by a directed edge pointing up
from a face to its attached face (co-face) of higher dimension.

\myeps{1}{attach_2002}{(Left) Simplicial complex of tracking
sensors. (Right) Attachment diagram.}

\rem{

\begin{figure}[H]
\centering
\includegraphics[scale=.4]{ASC.eps}
\caption{Simplicial complex of tracking sensors}\label{ASC_EX1}
\end{figure}

}

Next, given a simplicial complex, a sheaf is an assignment of data to each face that is compatible with the inclusion of faces.

\begin{defn}
A \textbf{sheaf $\Sh$ of sets} on an abstract simplicial complex $\Delta$ consists of the assignment of

\begin{enumerate}

\item a set $\Sh(\delta)$ to each face $\delta$ of $\Delta$ (called the
\textbf{stalk} at $\delta$), and

\item a function $\Sh(\gamma \ar \delta): \Sh(\gamma) \rightarrow
\Sh(\delta)$ (called the \textbf{restriction map} from $\gamma$ to
$\delta$) to each inclusion of faces $\gamma \ar \delta$, which obeys

\[ \Sh(\delta \ar \lambda) \circ \Sh(\gamma \ar \delta) = \Sh(\gamma \ar \lambda)
\text{ whenever } \gamma \ar \delta \ar \lambda.	\]

\end{enumerate}

In a similar way, a {\bf sheaf of vector spaces} assigns a vector
space to each face and a linear map to each attachment. And a {\bf sheaf of
groups}	 assigns a group to each face and a group homomorphism to each
attachment.

\end{defn}

Intuitively, the stalk over a face is the space where the data
associated with that face lives; and the restriction functions
establish the grounds by which interacting data can be said to be
consistent or not.

Returning to the sensors of our
tracking network, a bear's GPS collar provides its position and
elevation given in units of (UTM N, UTM E, m). Thus, the stalk over
the $G=\textit{Bear Collar GPS}$ vertex is $\mathbb{R}^3$. Then when
the researcher goes out to locate a bear he supplies two sets of
coordinates and a text description of his location. The first
coordinates, GPS position and elevation given in units of (lat, long,
ft), come from the GPS in the vehicle the researcher is driving. Then
when the researcher stops to make a measurement, he or she supplies a
text description of his location. Additionally, the researcher records
his or her position and elevation in (UTM N, UTM E, m), and the bear's
position (in relation to his own) off a handheld VHF tracking receiver
that connects to the bear's radiocollar, yielding a data point in
$\R^5$. The bear's relative position is measured in polar coordinates
$(r, \theta)$. The data sources and stalks are summarized in Table
\ref{tab:summary}, and the left side of \fig{raw_2002} shows our ASC
now adorned with the stalks on the sensor faces.

\begin{table}[H]
\centering \small
\begin{tabular}{lllllll}
Vertex & Data Format & Description        & Stalk  \\ \hline
& & & \\
$G=$\textit{Bear Collar} & (E, N, m)    & Position and elevation & $\R^3$ \\
& & of bear from collar & \\

& & & \\
$V$=\textit{Vehicle GPS} & (lat, long, ft)    & Position and elevation  & $\R^3$  \\
					 & & of human from vehicle &  \\
					 & & & \\
$T=$					 \textit{Text} & string   & Text description  &  set of strings \\
					 & & of human's location & \\

& & & \\
$R=$\textit{Radio VHF Device} & (E, N, m, m, deg)   & Position and elevation  & $\R^5$ \\
& & of human and & \\
& & position of bear & \\
& & relative to human&  \\

\end{tabular}
\caption{Data feeds for tracking sheaf model} \label{tab:summary}
\end{table}

Now that we have specified the stalks for the vertices, we must decide
the stalks for the higher order faces along with the restriction
maps. Notice that the triangle face of the tracking ASC is formed by
sensors each measuring the researcher's location. However, the data
types do not agree. Thus, to compare these measurements the
information should first be transformed into common units along the
edges and then passed through to the triangular face. Since multiple
sensors read out in UTM for position and meters for elevation, we
choose these as the common coordinate system. Then the stalk for the
higher order faces of the triangle is $\R^3$. To convert the text
descriptions we use Google Maps API \cite{googlemaps} and to convert
the (lat, long, ft) readings, coming from the vehicle, we use the Python
open source package utm 0.4.1 \cite{utm}. The data from the handheld
tracker is already in the chosen coordinates so the restriction map to
the edges of the triangle is simply a projection.

Likewise, the edge coming off the triangle connects the two sensors that report on the bear's location. To compare the sensors we need to map them into common coordinates through the restriction maps. The sensor on the bear's collar provides a position and elevation for the bear, but the radio VHF device only reports the bear's position relative to the researcher. Therefore, to compare the two readings along the adjoining edge, we must forgo a degree of accuracy and use $\R^2$ as the stalk over the edge. For the bear collar sensor this means that the restriction map is a restriction which drops the elevation reading. For the VHF data, we first convert the polar coordinates into rectangular as follows:
	\({theta} \phi: \R^5 \rightarrow \R^5, \hs \phi((x,y,z,r,\theta)^T) =
		(x, y, z, r\cos(\theta), r\sin(\theta))^T. \)
	Then to obtain the position of the bear, we add the relative bear
position to the human's position.

The right side of \fig{raw_2002} thus shows the full sheaf model on
the attachment diagram of the ASC. Note the stalks on each sensor
vertex, text for the text reader and numerical vectors for all the
others. Restriction functions are then labels on the edges connecting
lower dimensional faces to their attached higher dimensional
faces. UTM conversion and the Google Maps interface are on the edges
coming from $V$ and $T$ respectively. Since the pairwise relations all
share UTM coordinates, only id mappings are needed amongst them up to
the three-way $H$ face. Labels of the form $pr_{x-y}$ are projections
of the corresponding coordinates (also representable as binary
matrices of the appropriate form). Finally, the restriction from $R$
up to $B$ is the composition of the polar conversion of the final two components
with the projection on the first two to predict the bear position from
the radiocollar GPS, bearing, and range.

\myeps{.12}{raw_2002}{(Left) ASC adorned with stalks on the sensor
vertices. (Right) Sheaf model.}

\rem{

\begin{figure}[H]
\centering
\includegraphics[scale=0.4]{fullsheaf.eps}
\caption{Tracking sheaf attachment diagram}\label{model-sheaf}
\end{figure}

}

The sheaf model informs about the agreement of sensors through time as
follows. At a given time $t$, each vertex is assigned the reading last
received from its corresponding sensor, a data point from its stalk
space. These readings are then passed through the restriction maps to
the higher order faces for comparison. If the two measurments received
by an edge agree, this single value is assigned to that edge and the
algorithm continues. If it is possible to assign a single value
$H=\tup{x,y,z}$ to the $\{\textit{Text}, \textit{Vehicle
GPS},\textit{Radio VHF Device}\}$ face then all three measurements of
the human's location agree. Similarly, obtaining a single value
$B=\tup{x,y}$ for the $\{\textit{Radio VHF Device},\textit{Bear
Collar}\}$ edge signifies agreement among the two sensors tracking the
bear. If a full assignment can be made, we call it a {\bf global section}.
Non-agreement is definitely possible, giving rise to the notion of an {\bf assignment}.

	\begin{defn}
	Let $\Sh$ be a sheaf on an abstract simplicial complex
$\Delta$. An {\bf assignment} $\func{\alpha}{\Delta}{\prod\limits_{\delta \in
\Delta} \Sh(\delta)}$ provides a value $\alpha(\delta) \in \Sh(\delta)$ to
each face $\delta \in \Delta$. A {\bf partial assignment}, $\beta$, provides a value
for a subset $\Delta' \subset \Delta$ of faces, 
$\func{\beta}{\Delta'}{\prod\limits_{\delta \in \Delta'} \Sh(\delta)}$. 
An assignment $s$ is called a {\bf
global section} if for each inclusion $\delta \ar \lambda$ of faces,
$\Sh(\delta \ar \lambda)(s(\delta)) = s(\lambda)$.
	\end{defn}

As an example, a possible global section for our tracking sheaf is:
\begin{eqnarray*}
&& s(\textit{Text}) = \text{`Intersection of Victoria Rd and Meadow Rd'} \\
&& s(\textit{Vehicle GPS}) =  \left( \begin{array}{c} 35.6^{\circ} \text{ lat} \\ -82.6^{\circ} \text{ long}\\ 2019 \text{ ft} \end{array}\right), \\
&& s(\textit{Bear Collar}) = \left( \begin{array}{c} 358391 \text{ E} \\ 3936750 \text{ N} \\ 581\text{ m} \end{array}\right), \\
&& s(\textit{Radio VHF Device}) = \left( \begin{array}{c} 358943 \text{ E} \\ 3936899 \text{ N} \\615.4 \text{ m}\\ 572 \text{ m} \\ 195^{\circ} \end{array}\right),\\
&&  s(\{\textit{Radio VHF Device}, \textit{Bear Collar}\}) =  \left( \begin{array}{c} 358391 \text{ E} \\ 3936750 \text{ N} \end{array}\right)\\
&& \text{and } s(\{\textit{Text}, \textit{Vehicle GPS}\}) = s(\{\textit{Text}, \textit{Radio VHF Device}\})  \\
&& = s(\{\textit{Vehicle GPS}, \textit{Radio VHF Device}\}) \\
&& = s(\{\textit{Text}, \textit{Vehicle GPS}, \textit{Radio VHF Device}\}) = \left( \begin{array}{c}  358943 \text{ E}\\ 3936899 \text{ N} \\ 615.4 \text{ m} \end{array}\right).
\end{eqnarray*}

\subsection{Consistency Structures, Pseudosections, and Approximate Sections}

At a given time, a global section of our tracking model corresponds to
the respective sensor readings simultaneously agreeing on the location
of both the researcher and the bear. This would be ideal, however
in practice it is very unlikely that all the sensor readings will
agree to the precision shown above. For certain applications, such as
our tracking model, the equality constraint of a global section may be
too strict. Consistency structures \cite{Robinson15} and the consistency radius \cite{Robinson_sheafcanon} address this concern.
Consistency structures loosen the
constraint that sensor data on the vertices match, upon passing
through restrictions, by instead requiring that they merely
\textit{agree}. Agreement is measured by a boolean function on each
individual face. The pairing of a sheaf with a collection of these
boolean functions is called a {\bf consistency structure}.

	\begin{defn}\label{CSdef}
	A \textbf{consistency structure} is a triple $(\Delta,\Sh, \bc)$
where $\Delta$ is an abstract simplicial complex, $\Sh$ is a sheaf
over $\Delta$, and $\bc$ is the assignment to each non-vertex $d$-face
$\lambda \in \Delta, d > 0$, of a function
	\[ \bc_{\lambda}:  \multiset{\Sh(\lambda)}{\dim \lambda +1}  \rightarrow \{0,1\}	\]
	where $\multiset{Z}{k}$ denotes the set of multisets of length $k$ over $Z$.
	\end{defn}

The multiset in the domain of $\bc_\lambda$ represents the multiple
sheaf values to be compared for all the vertices impinging on a
non-vertex face $\lambda$, while the codomain $\{0,1\}$ indicates
whether they match ``well enough'' or not. In particular, we have the
\textbf{standard consistency structure} for a sheaf $\Sh$, which
assigns an {\em equality} test to each non-vertex face $\lambda =
\{v_1, v_2, \dots, v_k\}$:
	\[ \bc_{\lambda}([z_1, z_2, \dots, z_k]) =
		\begin{cases}
			1 & \text{if }z_1 = z_2 = \dots = z_k \\
			0 & \text{otherwise }
		\end{cases} \]
	where $[ \cdot ]$ denotes a multiset, and $z_i = \Sh( \{v_i\} \ar
\lambda )(\alpha( \{v_i\} ))$.

A consistency structure broadens the equality requirement
natively available in a sheaf to  classes of values which
are considered equivalent. But beyond that, in our tracking model, 
each stalk is a metric space. Thus
we can utilize the natural metric to test that points are merely
``close enough'', rather
than agreeing completely or being equivalent. Using $\epsilon$ to
indicate the quantitative amount of error present or tolerated in an
assignment, we define the \textbf{$\epsilon$-approximate
consistency structure} for a sheaf $\Sh$ as follows. For each
non-vertex size $k$ face $\lambda = \{v_1, v_2, \dots, v_k\} \in \Delta, k > 1$ define
	$$
\bc_{\lambda}([z_1, z_2, \dots, z_k]) =
\begin{cases}
1 & \text{if } \hat{\sigma}([z_1, z_2, \dots, z_k]) \leq \epsilon,  \\
0 & \text{otherwise, }
\end{cases}
$$
where $$\hat{\sigma}(Y) = \sqrt{\frac{1}{|Y|}\sum\limits_{y \in Y} || y - \mu_Y||^2} = \sqrt{\frac{1}{|Y|}\text{Tr}(\Sigma_Y)}$$
$\mu_Y$ is the mean of the set $Y$, and $\Sigma_Y$ is the covariance matrix of the multidimensional data $Y$. This measure of consistency gives a general idea of the spread of the data.

The corresponding notion of a global section for sheaves is called a pseudosection for a consistency structure.

	\begin{defn}
	An assignment $s \in \prod\limits_{\delta \in \Delta}
\Sh(\delta)$ is called a $(\Delta, \Sh, \bc)$-\textbf{pseudosection}
if for each non-vertex face $\lambda = \{v_1, \dots, v_k\}$
	{\small
\begin{enumerate}
 \item $\bc_{\lambda}([\Sh(\{v_i\} \ar \lambda)s(\{v_i\}) \hs : \hs i = 1, \dots, k]) = 1$, and
 \item $\bc_{\lambda}([\Sh(\{v_i\} \ar \lambda)s(\{v_i\}) \hs : \hs i = 1, \dots, j-1,j+1, \dots k] \cup [s(\lambda)]) = 1$ for all $j = 1, 2,\dots, k$.
 \end{enumerate}
}
\end{defn}

An assignment $s$ which is a pseudosection
guarantees that for any non-vertex face $\lambda$, (1) the restrictions of its
vertices to the face are close, and (2) the value assigned to the face
is consistent with the restricted vertices.

When $\bc$ is the standard consistency structure, then it comes about that
	\[ s(\lambda) =
		\Sh(v_1 \ar \gamma)s(\{v_1\}) = \Sh(v_2 \ar \gamma)s(\{v_2\}) = \dots =
			\Sh(v_k \ar \gamma)s(\{v_k\})	\]
	for each non-vertex face $\lambda = \{v_1, \dots, v_k\}$. Thus
pseudosections of a standard consistency structure are global sections
of the corresponding sheaf.

On the other hand, when $\bc$ is the $\epsilon$-approximate
consistency structure, then we have that
	\begin{enumerate}
	\item $\var([\Sh(\{v_i\} \ar \lambda)s(\{v_i\}) \hs : \hs i = 1, \dots, k]) \leq \epsilon$, and
	\item $\var([\Sh(\{v_i\} \ar \lambda)s(\{v_i\}) \hs : \hs
		i = 1, \dots, j-1,j+1, \dots k] \cup
		[s(\lambda)]) \leq \epsilon$ for all $j = 1, 2, \dots, k$.
	\end{enumerate}
	for each non-vertex face $\lambda = \{v_1, v_2, \dots, v_k\}$.

\rem{

	\begin{figure}[H]
\centering
\includegraphics[scale=.4]{sheaf-abbrev.eps}
\caption{Sheaf abbreviated attachment diagram}\label{fig:sheaf-abbrev}
\end{figure}

}

Pseudosections of $\epsilon$-approximate consistency structures are
precisely the generalization needed for our tracking model. The
$\epsilon$-approximate consistency structure over our tracking sheaf
consists of the following assignment of functions:
	\begin{eqnarray*}
	\bc_{X},\bc_{Y},\bc_{Z}&:&  \multiset{\R^3}{2} \rightarrow \{0,1\}, \\
	\bc_{B}&:&  \multiset{\R^2}{2} \rightarrow \{0,1\} \\
	\bc_{H}&:&  \multiset{\R^3}{3} \rightarrow \{0,1\}
	\end{eqnarray*}
	where
	\[ \bc_{X}([z_1,z_2]) = \bc_{Y}([z_1,z_2]) = \bc_{Z}([z_1,z_2]) = \bc_{B}([z_1,z_2]) =
		\begin{cases}
			1  \text{ if } \var([z_1,z_2]) \leq \epsilon  \\
			0  \text{ otherwise }
		\end{cases}	\]
	and
	\[ \bc_{H}([z_1,z_2,z_3]) =
		\begin{cases}
			1  \text{ if } \var([z_1,z_2,z_3]) \leq \epsilon  \\
			0  \text{ otherwise. }
		\end{cases}.	\]
	A pseudosection for this consistency structure is an assignment
	\[ \tup{ s(R), s(v), s(T), s(G), s(X), s(Y), s(Z), s(B), s(H) }	\]
	such that for each non-vertex face $\lambda \in \{X,Y,Z,B,H\}$
	\begin{enumerate}
	\item $\var([\Sh(\{v\} \ar \lambda)s(\{v\}) \hs : \hs v \in \lambda]) \leq \epsilon$, and
	\item $\var([\Sh(\{v\} \ar \lambda)s(\{v\}) \hs : \hs v \in \lambda] \cup
		[s(\lambda)] \setminus [s(w)]) \leq \epsilon$ for all $w \in \lambda$.
	\end{enumerate}

Now $s(R)$, $s(V)$, $s(T)$, and $s(G)$ are measurements given by the
data feeds, so we are left to assign $s(X)$, $s(Y)$, $s(Z)$,
$s(B)$, and $s(H)$ to minimize $\epsilon$. Physically, a pseudosection
of the $\epsilon$-approximate consistency structure for our tracking
model assigns positions for both the human and bear so that the ``spread''
of all measurements attributed to any face is bounded by $\epsilon$. The
question then becomes how does one minimize $\epsilon$ efficiently? The
theorem below states that the minimum $\epsilon$ for which a
pseudosection exists is determined only by the restricted images of the
vertices.

	\begin{thm} \label{thm:eps}
	Let $\Sh$ be a sheaf over an abstract simplicial complex $\Delta$ such
that each stalk is a metric space and let $s \in \prod\limits_{\delta
\in \Delta} \Sh(\delta)$ be an assignment. The minimum $\epsilon$ for which
$s$ is a pseudosection of the $\epsilon$-approximate consistency
structure $(\Delta, \Sh, \bc)$ is
	\[ \epsilon^* =
		\max_{\lambda \in \Delta \setminus \{ \{v\} \st v \in V\}}
			\var([\Sh(\{w\} \ar \lambda)s(\{w\}) \st w \in \lambda]). \]
	\end{thm}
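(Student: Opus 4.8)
The plan is to prove the two matching inequalities separately. I read ``the minimum $\epsilon$ for which $s$ is a pseudosection'' in the sense of the paragraph just before the theorem: the vertex values $s(\{v\})$ are fixed data feeds, and one minimizes jointly over $\epsilon$ and the still-free values $s(\lambda)$ on the non-vertex faces. For the lower bound, suppose $s$ is a pseudosection of the $\epsilon$-approximate consistency structure. Condition (1) in the definition of a pseudosection, read for this particular $\bc$, says precisely that $\var([\Sh(\{w\} \ar \lambda)s(\{w\}) \st w \in \lambda]) \le \epsilon$ for every non-vertex face $\lambda$; taking the maximum over all such $\lambda$ gives $\epsilon \ge \epsilon^*$. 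This half uses only condition (1), never sees the values $s(\lambda)$, and does not invoke the cocycle axiom of $\Sh$, since only the direct restrictions $\Sh(\{w\}\ar\lambda)$ appear.

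For achievability I must fill in $s$ on the non-vertex faces so that it becomes a pseudosection with $\epsilon = \epsilon^*$. For each non-vertex face $\lambda$ with $|\lambda| = k$, list the restricted vertex values as $A_\lambda = [a_1,\dots,a_k]$ and set $s(\lambda) := \mu_{A_\lambda}$, the mean of $A_\lambda$. Then condition (1) holds with $\epsilon = \epsilon^*$ by the very definition of $\epsilon^*$. For condition (2), fix $\lambda$ and an index $j$ and let $B_j$ be the multiset obtained from $A_\lambda$ by replacing the entry $a_j$ with $\mu_{A_\lambda}$ (still of size $k$). Expanding $\var(B_j)^2 = \tfrac1k\sum_{y\in B_j}\|y - \mu_{B_j}\|^2$, using $\mu_{B_j} = \mu_{A_\lambda} - \tfrac1k(a_j - \mu_{A_\lambda})$ and $\sum_{i\ne j}(a_i - \mu_{A_\lambda}) = -(a_j - \mu_{A_\lambda})$, yields the identity
\[ \var(B_j)^2 \;=\; \var(A_\lambda)^2 \;-\; \frac{k+1}{k^2}\,\bigl\|a_j - \mu_{A_\lambda}\bigr\|^2 \;\le\; \var(A_\lambda)^2 \;\le\; (\epsilon^*)^2 . \]
Hence condition (2) also holds at $\epsilon^*$, so the completed $s$ is a pseudosection with $\epsilon = \epsilon^*$; combined with the lower bound, the minimum is exactly $\epsilon^*$.

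The one step that is more than bookkeeping is this variance identity — equivalently, the assertion that replacing a single member of a finite multiset by that multiset's mean never increases $\var$. I would prove it by the direct expansion sketched above, or more cleanly by translating so that $\mu_{A_\lambda}=0$ (legitimate since $\var^2$ is the translation-invariant mean squared distance to the centroid) and then expanding. A secondary, expository obstacle is that ``each stalk is a metric space'' is not literally enough: $\var$ is defined through $\mu_Y$ and $\Sigma_Y$, so forming $s(\lambda) = \mu_{A_\lambda}$ presupposes — as holds throughout the bear model — that the relevant stalks are copies of some $\R^n$; I would either strengthen the hypothesis accordingly or phrase the achievability step in terms of a barycenter assumed to exist in $\Sh(\lambda)$.
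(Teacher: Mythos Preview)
Your proof is correct and follows the same strategy as the paper: the lower bound comes directly from condition~(1), achievability is obtained by setting each $s(\lambda)$ to the centroid $\mu_{A_\lambda}$, and the one nontrivial step is the lemma that replacing a single entry of a multiset by that multiset's mean never increases $\var$. The paper proves this lemma via the variational characterization of the mean (the centroid minimizes the mean squared deviation, so $\tfrac1k\sum\|y_i-\mu_{B_j}\|^2\le\tfrac1k\sum\|y_i-\mu_{A_\lambda}\|^2\le\var(A_\lambda)^2$), whereas you derive the sharper explicit identity $\var(B_j)^2 = \var(A_\lambda)^2 - \tfrac{k+1}{k^2}\|a_j-\mu_{A_\lambda}\|^2$; both arguments are valid, and your remark that a bare metric-space hypothesis is not literally enough to form $\mu_{A_\lambda}$ is a fair criticism of the statement as written.
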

	We call this minimum error value $\epsilon^*$ the {\bf consistency
radius}. 

The consistency radius is a quantity that is relatively easy to interpret.  If it is small, it means that there is minimal disagreement amognst observations.  If it is large, then it indicates that at least some sensors disagree.
        
The proof of this theorem is an application of the following Lemma.

	\begin{lem}
	Let $Z = \{z_1, \dots, z_k\}$ be a set of real numbers with mean
$\mu_Z$. For $z \in Z$, define $Y_z = Z \setminus z \cup \mu_Z$. Then
$\forall z \in Z, \hat{\sigma}(Y_z) \leq \hat{\sigma}(Z)$.
	\end{lem}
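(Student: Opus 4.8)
The plan is to reduce the statement to the elementary fact that, for a finite multiset $Y$ of points in a Euclidean space, the centroid $\mu_Y$ is the unique minimizer of $c \mapsto \sum_{y \in Y}\|y-c\|^2$; more precisely, the parallel-axis identity $\sum_{y\in Y}\|y-c\|^2 = \sum_{y\in Y}\|y-\mu_Y\|^2 + |Y|\,\|c-\mu_Y\|^2$ holds for every $c$. In particular $|Y|\,\hat\sigma(Y)^2 = \sum_{y\in Y}\|y-\mu_Y\|^2 \le \sum_{y\in Y}\|y-c\|^2$ for all $c$. Although the lemma is stated for real numbers, this argument is dimension-free, so it applies verbatim to the vector-valued stalks appearing in Theorem~\ref{thm:eps}.

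First I would record that $|Y_z| = |Z| = k$: forming $Y_z$ deletes the entry $z$ and inserts one copy of $\mu_Z$, read as a multiset operation, so the $1/|Y|$ normalizations defining $\hat\sigma(Y_z)$ and $\hat\sigma(Z)$ coincide. Hence it suffices to show $\sum_{y\in Y_z}\|y-\mu_{Y_z}\|^2 \le \sum_{w\in Z}\|w-\mu_Z\|^2$.

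Next, apply the minimality fact to $Y_z$ with the \emph{suboptimal} center $c=\mu_Z$, giving $\sum_{y\in Y_z}\|y-\mu_{Y_z}\|^2 \le \sum_{y\in Y_z}\|y-\mu_Z\|^2$. Then evaluate the right-hand side directly from the definition of $Y_z$: the inserted copy of $\mu_Z$ contributes $\|\mu_Z-\mu_Z\|^2 = 0$, and the surviving entries give exactly $\sum_{w\in Z}\|w-\mu_Z\|^2 - \|z-\mu_Z\|^2$. Since $\|z-\mu_Z\|^2 \ge 0$, this is at most $\sum_{w\in Z}\|w-\mu_Z\|^2 = k\,\hat\sigma(Z)^2$. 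Chaining the two inequalities yields $k\,\hat\sigma(Y_z)^2 \le k\,\hat\sigma(Z)^2$, i.e. $\hat\sigma(Y_z)\le\hat\sigma(Z)$, for every $z\in Z$.

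There is essentially no obstacle here; the only thing to be careful about is the multiset bookkeeping, namely that $Y_z$ genuinely has the same cardinality as $Z$ (so the normalizations match) and that an entry equal to $\mu_Z$ is kept with multiplicity. As a byproduct, tracking the two slack terms above shows $\hat\sigma(Z)^2 - \hat\sigma(Y_z)^2 = \tfrac{1}{k}\|z-\mu_Z\|^2 + \|\mu_{Y_z}-\mu_Z\|^2 \ge 0$, so the inequality is strict unless $z=\mu_Z$; this refinement is not needed for Theorem~\ref{thm:eps}, but it clarifies exactly when replacing a vertex value by the mean strictly reduces the spread.
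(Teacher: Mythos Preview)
Your proof is correct and follows essentially the same route as the paper's: note $|Y_z|=k$, use that the mean minimizes the sum of squared deviations to replace $\mu_{Y_z}$ by the suboptimal center $\mu_Z$, and then observe that swapping $z$ for $\mu_Z$ drops the nonnegative term $\|z-\mu_Z\|^2$. Your version is a bit more explicit (invoking the parallel-axis identity and tracking the slack to get the strictness refinement), but the argument is the same.
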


\begin{proof}
Fix $x \in Z$. Note that $|Y_z|=k$, and that $\hat{\sigma}(Y_z) \leq \hat{\sigma}(Z)$ if and only if $\hat{\sigma}(Y_z)^2 \leq \hat{\sigma}(Z)^2$.
Next,
$$
\hat{\sigma}(Y_z)^2 = \frac{1}{k}\sum_{i = 1}^k(y_i - \mu_Y)^2 \leq \frac{1}{k}\sum_{i = 1}^k(y_i-\mu_Z)^2
$$
since the expression is minimized around the mean. Then
$$
\frac{1}{k}\sum_{i = 1}^N(y_i-\mu_Z)^2 \leq \hat{\sigma}(Z)^2
$$
as $0 \leq (z-\mu_Z)^2$, completing the proof.
\end{proof}

\subsection{Maximal Consistent Subcomplexes}\label{max section}

The association of data with a signal network, as in the case of our tracking model, lends itself naturally to the question of consistency. Data is received from various signal sources, some of which should agree but may not. Thus, one would like to identify maximal consistent portions of the network. In \cite{B}, Praggastis gave an algorithm for finding a unique set of maximally consistent subcomplexes.

Let $(\Delta, \Sh,\bc)$ be a consistency structure. For each $\tau \in \Delta$, we define $\s(\tau) = \{ \rho \in \Delta \hs : \hs \tau \subseteq \rho\}$ to be the set of faces containing $\tau$. We can think of the star of a face as its sphere of influence. Similarly, for $\Delta' \subset \Delta$, we define $\s(\Delta') = \bigcup\limits_{\tau \in \Delta'}\s(\tau)$. Next, for a subset of vertices $W \subseteq U$, we define $\Delta_W  = \{\tau \in \Delta \hs : \hs \tau \subseteq W\}$ to be the subcomplex of $\Delta$ induced by $W$. Note that $\Delta_W$ inherits the consistency structure $(\Delta_W, \Sh(\Delta_W), \bc \vert_{\Delta_W})$ from $(\Delta, \Sh,\bc)$. Now let $s$ be a sheaf partial assignment on just the vertices $U$. We say that $s$ is consistent on $\Delta_W$ if $ C_\gamma([\Sh(v \ar \gamma)s(v) \hs : \hs v \in \gamma]) =1$ for each non-vertex face $\gamma$ in $\Delta_W$. The theorem below states that we can associate a unique set of maximally consistent subcomplexes to any vertex assignment.

\begin{thm}\label{max cover}
Let $(\Delta, \Sh,\bc)$ be a consistency structure and let $\alpha$ be a sheaf partial assignment on $U$. There exists a unique collection of subsets $\{W_i\}$ of $U$ which induce subcomplexes $\{\Delta_{W_i}\}$ of $\Delta$ with the following properties:
\begin{enumerate}
\item The assignment $\alpha$ is consistent on each $\Delta_{W_i}$, and any subcomplex on which $\alpha$ is consistent has some $\Delta_{W_i}$ as a supercomplex.
\item $\bigcup \s(\Delta_{W_i})$ is a cover of $\Delta$.
\end{enumerate}
\end{thm}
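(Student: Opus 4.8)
The plan is to isolate the purely combinatorial content of the statement: it reduces to the maximal-face structure of a down-closed family of vertex subsets. Call a non-vertex face $\gamma \in \Delta$ \emph{$\alpha$-good} if $\bc_\gamma([\Sh(v \ar \gamma)\alpha(v) : v \in \gamma]) = 1$ and \emph{$\alpha$-bad} otherwise; unwinding the definition of ``consistent on $\Delta_W$'', $\alpha$ is consistent on the induced subcomplex $\Delta_W$ precisely when $W$ contains no $\alpha$-bad face. Accordingly set
$$ \mathcal{F} = \{\, W \subseteq U \;:\; \alpha \text{ is consistent on } \Delta_W \,\}. $$
First I would record that $\mathcal{F}$ is a nonempty down-closed family: it contains every singleton $\{v\}$ (the complex $\Delta_{\{v\}}$ has no non-vertex face, so consistency is vacuous), and if $W \in \mathcal{F}$ and $W' \subseteq W$ then any non-vertex face $\gamma$ of $\Delta_{W'}$ has $\gamma \subseteq W' \subseteq W$, hence lies in $\Delta_W$ and is $\alpha$-good, so $W' \in \mathcal{F}$. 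Thus $\mathcal{F}$ is again an abstract simplicial complex over $U$.

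Next I would simply \emph{take} $\{W_i\}_i$ to be the collection of inclusion-maximal members of $\mathcal{F}$ (its facets). Because $U$ is finite, $\mathcal{F}$ is finite, so every member of $\mathcal{F}$ is contained in at least one $W_i$. Property (1) then follows at once: each $W_i \in \mathcal{F}$, so $\alpha$ is consistent on $\Delta_{W_i}$; and if $\Delta_W$ is any induced subcomplex on which $\alpha$ is consistent, then $W \in \mathcal{F}$, hence $W \subseteq W_i$ for some $i$ and $\Delta_W \subseteq \Delta_{W_i}$. For property (2), given $\rho \in \Delta$ pick a vertex $v \in \rho$; then $\{v\} \in \mathcal{F}$, so $v \in W_i$ for some $i$, whence $\{v\} \in \Delta_{W_i}$ and $\rho \in \s(\{v\}) \subseteq \s(\Delta_{W_i})$. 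Therefore $\bigcup_i \s(\Delta_{W_i}) = \Delta$.

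For uniqueness I would invoke that the required subcomplexes are the \emph{maximally} consistent ones — equivalently, that $\{W_i\}$ forms an antichain of consistent vertex sets, which is what ``maximally consistent subcomplexes'' is shorthand for. If $\{W_i'\}$ is another such collection satisfying (1)--(2), then applying the second clause of (1) to a maximal member $M$ of $\mathcal{F}$ yields $\Delta_M \subseteq \Delta_{W_j'}$, i.e. $M \subseteq W_j'$, for some $j$; since $W_j' \in \mathcal{F}$ and $M$ is maximal, $M = W_j'$. Hence every maximal member of $\mathcal{F}$ occurs among the $W_i'$, and conversely each $W_j'$, being maximally consistent, is a maximal member of $\mathcal{F}$; so $\{W_i'\} = \{W_i\}$.

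I do not expect a genuine obstacle, since the argument is finite and elementary, but two points need care. First, the quantifier in property (1): ``any subcomplex on which $\alpha$ is consistent'' must be read as ranging over \emph{induced} subcomplexes $\Delta_W$ (the only ones for which consistency was defined), because for a non-induced subcomplex the induced complex on the same vertex set may acquire new, possibly $\alpha$-bad, faces and the domination claim would fail. Second, one must pin down the exact sense of ``unique'': the collection is unique as the set of maximal consistent induced subcomplexes; without the maximality requirement the whole family $\mathcal{F}$ (or any subfamily containing all its maximal members) would also satisfy (1) and (2), so maximality is essential to the statement.
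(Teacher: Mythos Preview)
Your proof is correct, and it takes a genuinely different route from the paper's. You go straight to the combinatorial core: the family $\mathcal{F}$ of consistent vertex sets is down-closed and contains all singletons, so its facets are exactly the $W_i$ sought, and properties (1), (2), and uniqueness drop out in a few lines. The paper instead gives a constructive argument built on Lemma~\ref{refine}: starting from $\Delta$, it locates an $\alpha$-bad face $\gamma$, splits into the subcomplexes $\Delta_{U\setminus v}$ for $v\in\gamma$, and recurses on each piece until only consistent subcomplexes remain, then prunes non-maximal ones; uniqueness is argued by the same mutual-containment trick you use. Your approach is shorter and makes the poset-theoretic content transparent, which dovetails nicely with the later Section~\ref{cover measures} where the $\{W_i\}$ are treated as an antichain in $2^n$. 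The paper's approach, on the other hand, yields an explicit algorithm for actually computing the $\{W_i\}$ from the data, which is the point emphasized in the surrounding text and is what gets used in the consistency filtration of Section~\ref{sec:filtration}. Your two closing caveats about induced subcomplexes and the implicit maximality hypothesis are well taken and consistent with how the paper reads the statement.
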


The proof of this result is constructive. The basic operation is described in the following lemma.

\begin{lem}\label{refine}
Let $(\Delta, \Sh,\bc)$ be a consistency structure and let $\alpha$ sheaf partial assignment on $U$. If $ C_\gamma([\Sh(v \ar \gamma)\alpha(v) \hs : \hs v \in \gamma]) = 0$ for some non-vertex face $\gamma \in \Delta$, then there exist subsets $\{W_i\}$ of $U$ which induce subcomplexes $\{\Delta_{W_i}\}$ of $\Delta$ such that:
\begin{enumerate}
\item $\gamma \notin \Delta_{W_i}$ for all $i$.
\item Every subcomplex on which $\alpha$ is consistent has some $\Delta_{W_i}$ as a supercomplex.
\item $\bigcup \s(\Delta_{W_i})$ is a cover of $\Delta$.
\end{enumerate}
\end{lem}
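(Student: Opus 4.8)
The plan is to read off the candidate vertex sets directly from the offending face. Write $\gamma=\{v_1,\dots,v_k\}\subseteq U$; since $\gamma$ is a non-vertex face, $k=\dim\gamma+1\ge 2$. For each $i=1,\dots,k$ I would set $W_i=U\setminus\{v_i\}$ and let $\Delta_{W_i}$ be the induced subcomplex. The guiding idea is that the hypothesis $\bc_\gamma([\Sh(v\ar\gamma)\alpha(v)\hs : \hs v\in\gamma])=0$ forbids $\gamma$ from belonging to any induced subcomplex on which $\alpha$ is consistent, so such a subcomplex must omit at least one vertex of $\gamma$, and hence sits inside one of these $k$ ``delete one vertex of $\gamma$'' subcomplexes.

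For property (1) I would note that every face of $\Delta_{W_i}$ is a subset of $W_i$, while $v_i\in\gamma$ and $v_i\notin W_i$, so $\gamma\not\subseteq W_i$ and therefore $\gamma\notin\Delta_{W_i}$. For property (2), let $\Delta_{V'}$ be an induced subcomplex of $\Delta$ on which $\alpha$ is consistent (I read ``subcomplex on which $\alpha$ is consistent'' as an induced subcomplex $\Delta_{V'}$, in keeping with the way that notion is defined just before the theorem). If $\gamma$ were a face of $\Delta_{V'}$, then, being a non-vertex face with $\bc_\gamma([\Sh(v\ar\gamma)\alpha(v)\hs : \hs v\in\gamma])=0$, it would contradict consistency of $\alpha$ on $\Delta_{V'}$; hence $\gamma\notin\Delta_{V'}$. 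Since $\gamma\in\Delta$, this forces $\gamma\not\subseteq V'$, so $v_i\notin V'$ for some $i$, i.e.\ $V'\subseteq W_i$, and therefore $\Delta_{V'}\subseteq\Delta_{W_i}$ exhibits $\Delta_{W_i}$ as the required supercomplex.

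For property (3), because $k\ge 2$ every vertex $v\in U$ differs from $v_i$ for at least one index $i$, so $v\in W_i$ and $\{v\}\in\Delta_{W_i}$; any face $\delta\in\Delta$ contains some vertex $v$, and then $\{v\}\subseteq\delta$ with $\{v\}\in\Delta_{W_i}$ gives $\delta\in\s(\{v\})\subseteq\s(\Delta_{W_i})$, so $\bigcup_i\s(\Delta_{W_i})=\Delta$. None of these three verifications is computationally heavy; the point I would be most careful about is pinning down the reading of ``subcomplex'' in clause (2), since the paper defines consistency of $\alpha$ only on induced subcomplexes $\Delta_W$ and one must check this matches the usage inside Theorem~\ref{max cover}. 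With the lemma in hand, that theorem follows by recursion: start from $W=U$; whenever $\alpha$ is inconsistent on a current piece $\Delta_W$ at some non-vertex face $\gamma$, replace that piece by the subcomplexes the lemma produces (applied with $W$ in the role of $U$), and iterate. The recursion terminates because each step strictly decreases the vertex set, and discarding any $W_i$ contained in another member of the collection leaves the unique maximal family asserted by Theorem~\ref{max cover}.
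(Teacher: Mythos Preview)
Your proposal is correct and follows essentially the same approach as the paper: define $W_i=U\setminus\{v_i\}$ for each vertex $v_i\in\gamma$, and verify the three properties exactly as you do. The paper's proof is terser but identical in substance, and your added care about reading ``subcomplex'' as an induced subcomplex $\Delta_{V'}$ matches the paper's intended usage.
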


\begin{proof}
For each vertex $v_i \in \gamma$, let $W_i = U \setminus v_i$ with induced subcomplex $\Delta_{W_i}$. Clearly $\gamma$ does not belong to any of the $\Delta_{W_i}$, and if $\Delta'$ is a subcomplex of $\Delta$ on which $\alpha$ is consistent, then the vertices of $\Delta'$ is a subset of some $W_i$. Finally, since every vertex in $U$ belongs to at least one $W_i$, we have that $\Delta = \bigcup \s(\Delta_{W_i})$.
\end{proof}

\begin{proof}[Proof of Theorem \ref{max cover}]

If $\alpha$ is consistent on $\Delta$, we are done, so suppose that $C_\gamma([\Sh(v \ar \gamma)\alpha(v) \hs : \hs v \in \gamma]) = 0$ for some non-vertex face $\gamma \in \Delta$. By Lemma \ref{refine}, there exists a set of subcomplexes $\{\Delta_{W_i}\}$ such that (1) $\gamma \notin \Delta_{W_i}$ for all $i$, (2) every subcomplex on which $\alpha$ is consistent has some $\Delta_{W_i}$ as a supercomplex, and (3) $\bigcup \s(\Delta_{W_i})$ is a cover of $\Delta$. Next, we repeat the process for each complex $\Delta_{W_i}$ individually. If $\alpha$ is inconsistent on $\Delta_{W_i}$, we apply Lemma \ref{refine} and replace $W_i$ with $\{W_{i_k}\}$ and $\Delta_{W_i}$ with $\{\Delta_{W_{i_k}}\}$.

Since $\Delta$ is finite, the process must terminate with a list of consistent subcomplexes. It is possible to have nested sequences of subcomplexes in the final list. Since we are only interested in maximal consistent subcomplexes, we drop any complex which has a supercomplex in the list. Let $\{W_i\}$ and $\{\Delta_{W_i}\}$ be the final list of vertex sets and subcomplexes, respectively, obtained in this way. By construction, $\alpha$ is consistent on each $\Delta_{W_i}$. Further, by Lemma \ref{refine}, any subcomplex on which $\alpha$ is consistent belongs to some $\Delta_{W_i}$ and the set $\{\s(\Delta_{W_i})\}$ is a cover of $\Delta$. Finally, suppose that $\{X_j\}$ and $\{\Delta_{X_j}\}$ are another pair of vertex sets and subcomplexes which satisfy the properties of the theorem. Fix some $\Delta_{X_k}$. Since $\alpha$ is consistent on $\Delta_{X_k}$, there exists some $\Delta_{W_\ell}$ such that $\Delta_{X_k} \subseteq \Delta_{W_\ell}$. However, by reversing the argument, there is also some $\Delta_{X_m}$ such that $\Delta_{W_\ell} \subseteq \Delta_{X_m}$. Finally, since $\Delta_{X_k}$ is maximal in $\{\Delta_{X_j}\}$, we must have that $k = m$ and $\Delta_{X_k} = \Delta_{W_\ell}$.
\end{proof}

In the context of our tracking model, given a collection of sensor readings and an $\epsilon$, Theorem \ref{max cover} provides a maximal vertex cover so that each associated subcomplex is $\epsilon$-approximate consistent.

\subsection{Measures on Consistent Subcomplexes}\label{cover measures}

In this section, we define a measure on the vertex cover associated with a set of maximal consistent subcomplexes by identifying the set of covers as a graded poset. We use the rank function of the poset as a measure. For a comprehensive background on posets, the interested reader is referred to \cite{Stanley}.

Let $\mc{P} = \langle P, \leq \rangle$ be a poset. For a subset $X \subseteq P$, let $\downarrow X = \{s \leq x \hs : \hs x \in X\}$ denote the ideal of $X$ in $P$. The set of all ideals, $\mc{I}(P)$, ordered by inclusion, forms a poset denoted $J(P)$. Additionally, there is a one-to-one correspondence between $J(P)$ and the set of all antichains $\mc{A}(P)$. Specifically, the functions
\begin{align*}
 \max &: J(P) \rightarrow \mc{A}(P), \hs \max(I) = \{x \in I \hs : \hs x \not < y \text{ for all }  y \in I\}, \text{ and }\\
 \downarrow &: \mc{A}(P) \rightarrow J(P), \hs \downarrow(\mc{A}) = \downarrow \mc{A}
 \end{align*}
 are inverses of each other. Next, recall that a poset is called graded if it can be stratified.
\begin{defn}
A poset $\mc{P} = \langle P, \leq \rangle$ is \textbf{graded} if there exists a rank function $r : P \rightarrow \N \hs \cup \hs \{0\}$ such that $r(s) = 0$ if $s$ is a minimal element of $P$, and $r(q) = r(p)+1$ if $p \prec q$ in $P$. If $r(s) = i$,  we say that $s$ has rank $i$. The maximum rank, $\max\limits_{p \in P}\{r(p)\}$, is called the \textbf{rank} of $\mc{P}$.
\end{defn}
\noindent As a consequence of the Fundamental Theorem for Finite Distributive Lattices, it is well known that $J(P)$ is a graded poset.
\begin{prop}\cite[Prop.~3.4.5]{Stanley} \label{ranked}
If $\mc{P} = \langle P, \leq \rangle$ is an $n$-element poset, then $J(P)$ is graded of rank $n$. Moreover, the rank $r(I)$ of $I \in J(P)$ is the cardinality of $I$.
\end{prop}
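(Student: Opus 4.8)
The cleanest route is to exhibit the rank function by hand and check the two clauses in the definition of a graded poset directly; this is essentially the content of \cite[Prop.~3.4.5]{Stanley}, hence our citation, but the argument is short enough to reconstruct. I would set $r : J(P) \to \N \cup \{0\}$, $r(I) = |I|$, which is well-defined because every order ideal is a subset of the finite set $P$. The poset $J(P)$ has the empty ideal $\emptyset$ as its unique minimal element, and $r(\emptyset) = 0$, so the first clause holds immediately. Everything then reduces to the covering clause: whenever $I \prec I'$ in $J(P)$, one needs $r(I') = r(I) + 1$.

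For that I would first note that a covering relation $I \prec I'$ forces $I \subseteq I'$ with $I \neq I'$, hence $|I| < |I'|$, so it only remains to rule out $|I'| \geq |I| + 2$. The key observation is that if $I \subseteq I'$ with $I \neq I'$ are order ideals of $P$, then $I'$ possesses a maximal element (in the order of $P$) that does not lie in $I$: if every maximal element of the subposet $I'$ belonged to $I$, then --- since $I$ is down-closed and, $P$ being finite, every element of $I'$ lies below some maximal element of $I'$ --- we would conclude $I' \subseteq I$, a contradiction. Fixing such an $x \in I' \setminus I$, the set $I'' = I' \setminus \{x\}$ is again an order ideal (deleting a maximal element preserves down-closedness), it contains $I$ because $x \notin I$, and it is a proper subset of $I'$. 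Since $I \subseteq I'' \subseteq I'$ with $I'' \neq I'$ and $I \prec I'$ is a cover, we must have $I'' = I$, so $|I'| = |I''| + 1 = |I| + 1$.

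This shows $J(P)$ is graded with rank function $r(I) = |I|$; its rank is then $\max_{I \in J(P)} |I| = |P| = n$, attained by the top ideal $I = P$, and the ``moreover'' clause is nothing but the definition of $r$. The only step that takes any thought is the maximal-element argument in the middle paragraph; the remainder is immediate from the definitions of order ideal and covering relation.
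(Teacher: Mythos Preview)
Your argument is correct and is the standard proof of this classical fact. Note, however, that the paper does not actually supply a proof of this proposition: it is stated with a citation to \cite[Prop.~3.4.5]{Stanley} and invoked as a known result, so there is no in-paper argument to compare against. Your reconstruction matches the standard treatment (and indeed Stanley's): exhibit $r(I)=|I|$, observe $\emptyset$ is the unique minimum, and show covers increase cardinality by exactly one via the ``remove a maximal element of $I'\setminus I$'' trick.
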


Now consider $2^n$, the Boolean lattice on $n$ elements. Explicitly, $2^n$ consists of all subsets of $\{1, 2, \dots, n\}$ where for all $a,b \subseteq \{1,2,\dots, n\}$ we define $a \leq b$ precisely when $a \subseteq b$, and the lattice operations are defined as $a \land b = a \cap b$, and $a \lor b = a \cup b$. The vertex cover $\{W_i\}$ associated with a maximal set of consistent subcomplexes $\{\Delta_{W_i}\}$, provided by Theorem \ref{max cover}, can be viewed as an antichain of $2^n$ where $n = |U|$. Moreover, by design, each of these antichains is a set cover of $\{1, 2, \dots,n\}$. We will call such an antichain \textbf{full}. Specifically, $\mc{A} = \{a_1, a_2, \dots, a_k\}$ is said to be full if $\bigcup_{i = 1}^k a_i = \{1, 2, \dots, n\}$.
The set of all full ideals $\overline{\mc{I}(2^n)}$ forms an induced subposet $\overline{J(2^n)}$ of the graded poset $J(2^n)$. Next, we show that $\overline{J(2^n)}$ is also graded.
\begin{prop}
$\overline{J(2^n)}$ is a graded poset of rank $2^n-(n+1)$.
\end{prop}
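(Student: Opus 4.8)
The plan is to recognize $\overline{J(2^n)}$ as an \emph{interval} inside the graded poset $J(2^n)$ and then read off gradedness and rank directly from Proposition~\ref{ranked}. Throughout, write $\{1,\dots,n\}$ for the ground set of $2^n$ and let
\[ I_0 \;=\; \{\emptyset\}\cup\bigl\{\{1\},\{2\},\dots,\{n\}\bigr\} \]
be the order ideal of $2^n$ generated by the $n$ singletons. I expect $I_0$ to be precisely the unique minimum full ideal, so that $\overline{J(2^n)}$ coincides with the principal filter above $I_0$ (equivalently, the interval from $I_0$ to the top element $2^n$).

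The first and key step is to prove that an ideal $I\in J(2^n)$ is full if and only if $I_0\subseteq I$. For the forward direction I would use that ``$I$ full'' means its maximal antichain $\max(I)$ covers $\{1,\dots,n\}$; hence for each $i$ some $a\in\max(I)\subseteq I$ contains $i$, and since $I$ is downward closed this forces $\{i\}\in I$, so $I_0\subseteq I$. For the converse, if $I_0\subseteq I$ then $\bigcup_{a\in I}a\supseteq\{1,\dots,n\}$, and because every element of the finite poset $2^n$ lies below a maximal element of $I$, we get $\bigcup_{a\in\max(I)}a=\bigcup_{a\in I}a=\{1,\dots,n\}$, i.e.\ $\max(I)$ is full. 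Since $I_0$ is plainly an ideal and is full, this identifies $\overline{J(2^n)}$ with the interval $[\,I_0,\;2^n\,]$ of $J(2^n)$.

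Next I would check that the covering relations of the induced subposet $[\,I_0,\,2^n\,]$ agree with those of $J(2^n)$: if $I\subseteq K\subseteq I'$ with $I,I'$ full and $K$ an ideal, then $K\supseteq I_0$, so $K$ is full and already lies in $\overline{J(2^n)}$; thus no covers are lost in passing to the subposet. It then follows from Proposition~\ref{ranked} — which gives that $J(2^n)$ is graded with rank $r(I)=|I|$ — that $\overline{J(2^n)}$ is graded, with rank function obtained by shifting so its unique minimal element $I_0$ has rank $0$, namely $I\mapsto |I|-(n+1)$, using $|I_0|=n+1$. The rank of the poset is the value at its top element $2^n$, that is $|2^n|-(n+1)=2^n-(n+1)$, as claimed.

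The only real obstacle is conceptual rather than computational: one must spot that ``full'' — a condition phrased on the maximal antichain — is equivalent to containing the single fixed sub-ideal $I_0$, which is exactly what turns $\overline{J(2^n)}$ into a genuine interval. Once that is in hand, the general fact that an interval of a graded poset is graded, together with the cardinality-equals-rank statement of Proposition~\ref{ranked}, does all the remaining work, and no estimates are needed.
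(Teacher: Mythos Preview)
Your proposal is correct and follows essentially the same approach as the paper: identify the unique minimum full ideal $I_0=\mathbf{0}=\{\emptyset,\{1\},\dots,\{n\}\}$, note that $\overline{J(2^n)}$ is the induced subposet above it, and then shift the rank function $r(I)=|I|$ from Proposition~\ref{ranked} by $|I_0|=n+1$. Your write-up is in fact more thorough than the paper's, which simply asserts that $\mathbf{0}$ is the single minimum and that covers in $\overline{J(2^n)}$ are covers in $J(2^n)$; you supply the missing justification by proving the equivalence ``full $\Leftrightarrow\ I_0\subseteq I$'' and explicitly checking that no intermediate ideals are lost.
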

\begin{proof}
By Proposition \ref{ranked}, $J(2^n)$ is a graded poset of rank $2^n$. Moreover, the rank function is given by $r(I) = |I|$. Now $\overline{J(2^n)}$ is an induced subposet of $J(2^n)$ with the single minimum element $ {\bf 0} :=\{\emptyset,\{1\}, \{2\}, \dots, \{n\}\}$. Define the function
$$
\overline{r}: \overline{J(2^n)} \rightarrow \N \hs \cup  \hs \{0\}, \hs \overline{r}(I) = r(I) - r({\bf 0}) = |I| - (n+1).
$$
Note that if $I_1 \prec I_2$ in $\overline{J(2^n)}$, then $I_1 \prec I_2$ in $J(2^n)$. The fact that $\overline{r}$ is a rank function then follows from the fact that $r$ is a rank function.
\end{proof}
Since $\overline{J(2^n)}$ is a graded poset, we can use the rank function as a measure on the vertex cover associated a set of maximal consistent subcomplexes.

\begin{defn}
Let $(\Delta, \Sh,\bc)$ be a consistency structure with $|V| = n$ and let $\mc{A} = \{W_i\}$ be a vertex cover obtained from Theorem \ref{max cover}. Then we define
$$
\overline{r}(\mc{A}) = |\downarrow \mc{A}| - (n+1).
$$
\end{defn}

Note that $0 \leq \overline{r}(\mc{A}) \leq 2^n-(n+1)$ and a larger value indicates more consistent subcomplexes. Indeed,
if $\downarrow \mc{A}_1 \subseteq \downarrow \mc{A}_2$, then $\overline{r}(\mc{A}_1) \leq \overline{r}(\mc{A}_2)$.

\subsection{Consistency Filtrations}\label{sec:filtration}

Now, consider a simplicial complex $\Delta$, a sheaf $\Sh$ on
$\Delta$, and partial assignment $\alpha$ to the vertices of
$\Delta$. Given any value $\epsilon \geq 0$ we can consider the
$\epsilon$-approximate consistency structure $(\Delta, \Sh,
\bc_\epsilon)$ and use Theorem \ref{refine} to obtain the set of
maximal consistent subcomplexes for the assignment $\alpha$. Varying
$\epsilon$ we obtain a filtration, which we call the {\bf consistency
filtration}, of vertex covers corresponding to landmark $\epsilon$ values
$\epsilon_0 = 0 < \epsilon_1< \dots <\epsilon_{\ell-1}
<\epsilon_{\ell} = \epsilon^*$ where we recall that $\epsilon^*$ is
the consistency radius, that is, the smallest
value of $\epsilon$ for which $\alpha$ is a $(\Delta, \Sh,
\bc_\epsilon)$-pseudosection. The corresponding refinement of vertex
covers is $C_0 \leq C_1 \leq \dots \leq C_{\ell-1} \leq C_{\ell} = U$
where each is a set of subcomplexes whose union is $\Delta$. We can
also compute the sequence of cover measures $p_0 < p_1 < \dots <
p_{\ell-1} <p_{\ell} = 1$ for each $C_i$.  (A somewhat different perspective on the consistency filtration is discussed in \cite{Robinson_assignments}, in which the consistency filtration is shown to be both functorial and robust to perturbations, and is so in both the sheaf and the assignment.)

The consistency filtration is a useful tool for examining consistency amongst a collection of sensors.  If the distance between two particular consecutive landmark values $\epsilon_i$ and $\epsilon_{i+1}$ is considerably larger than the rest, this indicates that there is considerable difficulty in resolving a disagreement between at least two groups of sensors.  Which sensors are at fault for this disagreement can be easily determined by comparing the covers $C_i$ and $C_{i+1}$.  

\rem{

\todo{What are the $p$ here really? And was the bound $\ell$
introduced above really explained in terms of the number of landmark
$\epsilon$ values? This bound interacts with the height of the cover
lattice involved.}
}

\section{Results} \label{results}

As detailed in \sec{ExpSetup}, data were gathered for twelve locating sessions, six with dummy
collars hidden around the city and six with live bears. The six live
sessions were distributed over three female bears with session
frequencies 3, 2, and 1. Each locating session was two hours long and
included 16 measurements. 

We first present overall measurement results across both bear and
dummy collars. We then drill down to a particular bear, N024, in the
context of its overall consistency radius. We conclude with a detailed
examination of a single measured assignment to this bear, particularly
at minute 5.41, in order to understand the details of the consistency
filtration.

\subsection{Overall Measurements}

Overall consistency radii  $\epsilon^*$ are shown below:
\fig{fig:system} shows for the full system; \fig{fig:human} for the
human; and \fig{fig:bears} for the bears.

\begin{figure}[H]
\centering
\includegraphics[scale=.35]{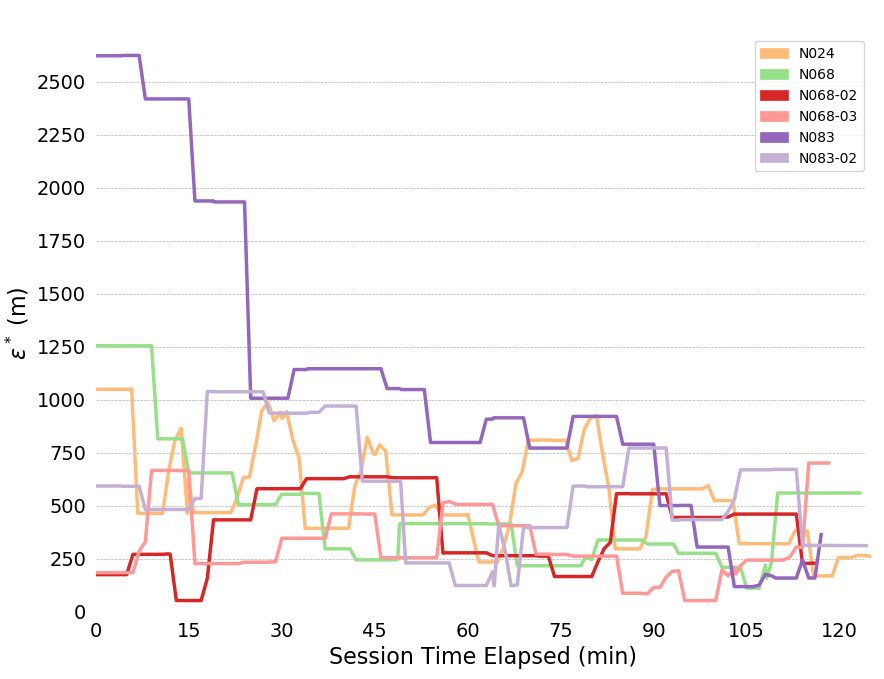}
\includegraphics[scale=.35]{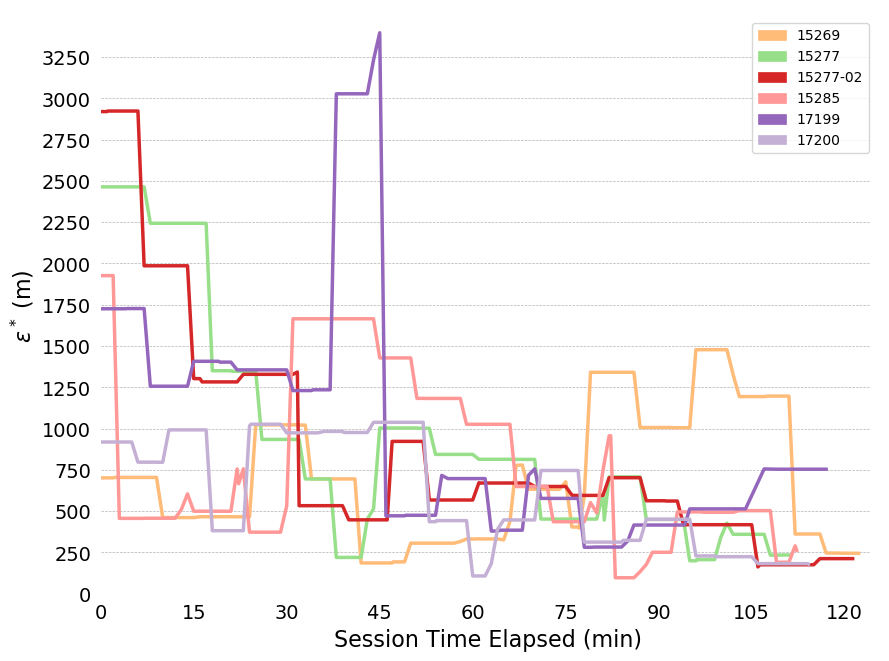}
\caption{Consistency radii $\epsilon^*$ over time, full system. (Left)
Bears. (Right) Dummy collars.}
\label{fig:system}
\end{figure}

\begin{figure}[H]
\centering
\includegraphics[scale=.35]{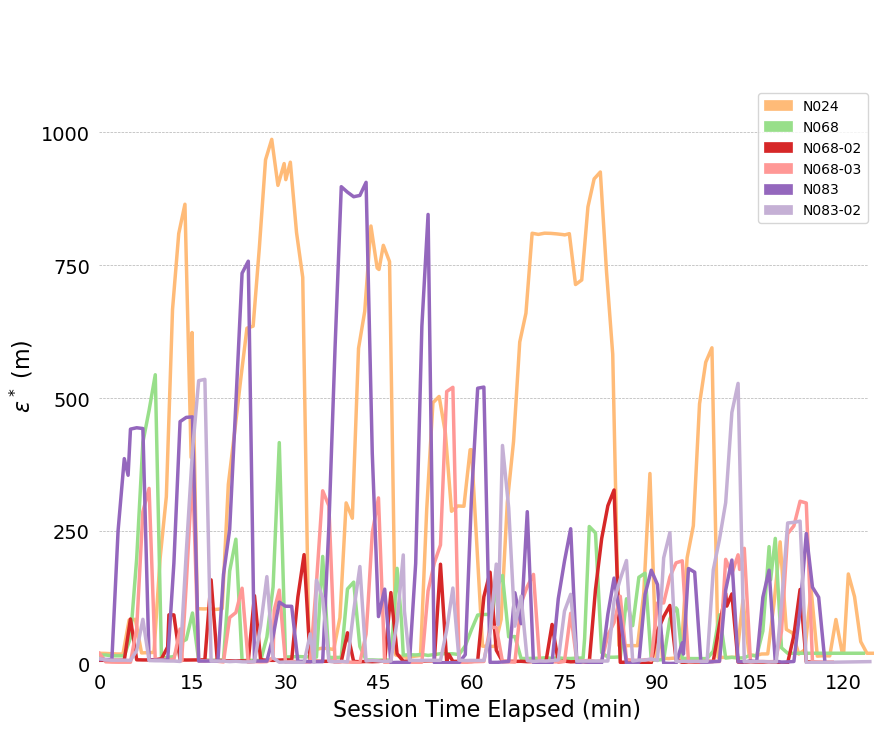}
\includegraphics[scale=.35]{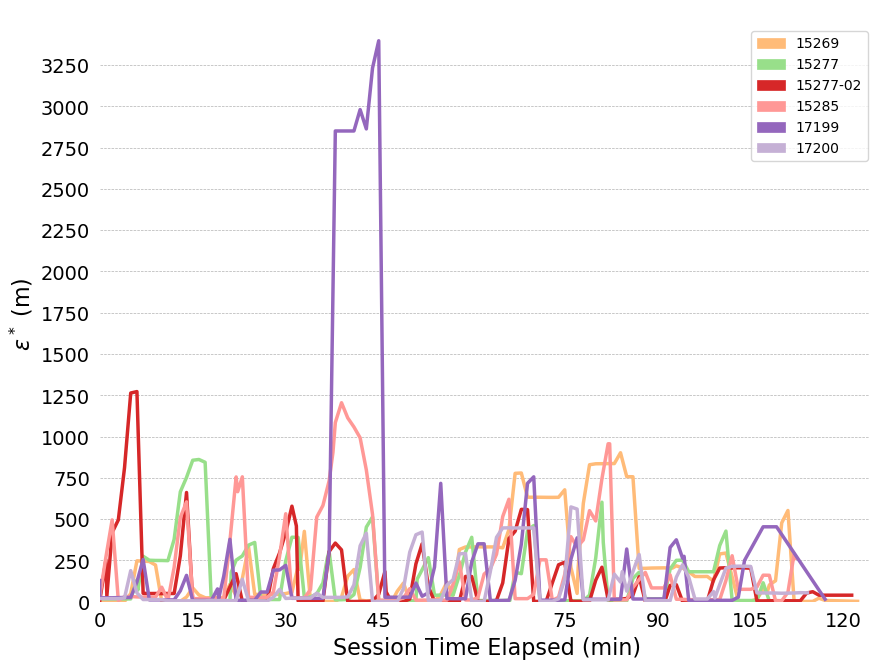}
\caption{Consistency radii $\epsilon^*$ over time, human. (Left)
Bears. (Right) Dummy collars.}
\label{fig:human}
\end{figure}

\begin{figure}[H]
\centering
\includegraphics[scale=.35]{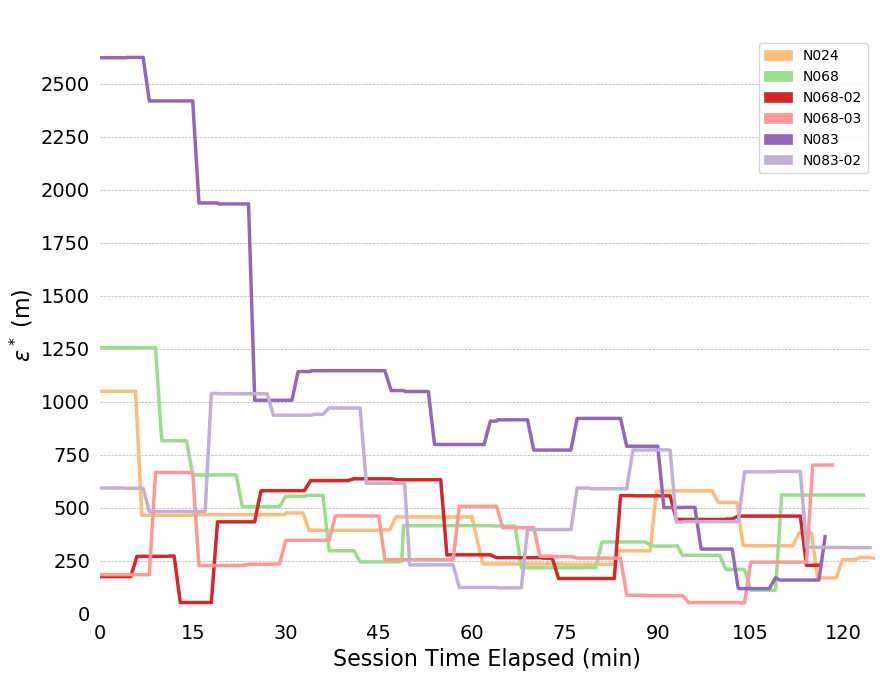}
\includegraphics[scale=.35]{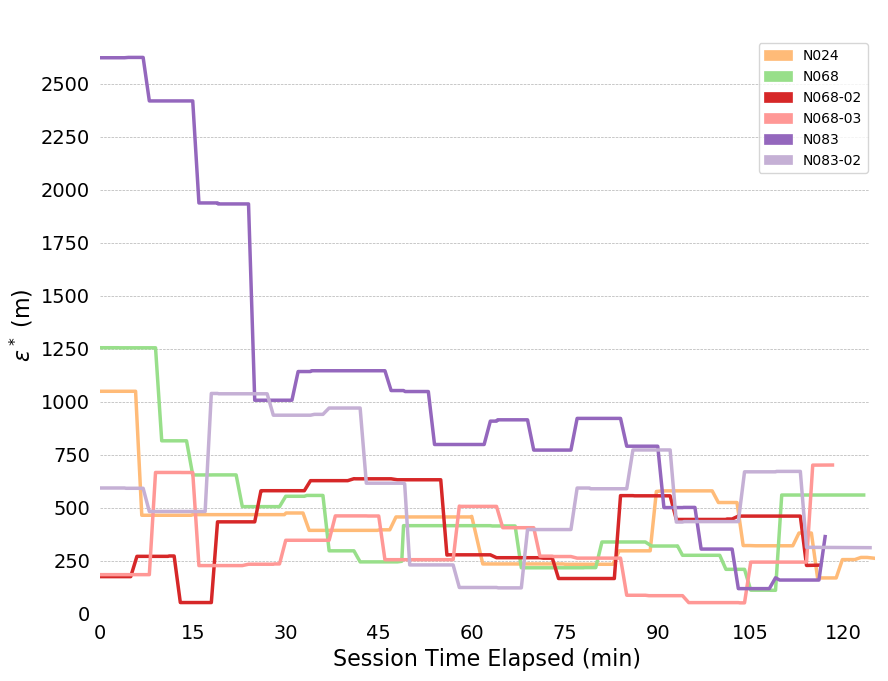}
\caption{Consistency radii $\epsilon^*$ over time, bears. (Left)
Bears. (Right) Dummy collars.}
\label{fig:bears}
\end{figure}

\rem{

Note that \ldots \todofrom{Cliff}{Other overall observations, seek
discussion with NCSU}.

		   \begin{figure}[H]
\centering
\includegraphics[scale=.41]{total_error_collar.eps}
\caption{Minimum pseudosection epsilon over time for each dummy collar.}\label{fig:epsstardummy}
\end{figure}

\begin{figure}[H]
\centering
\includegraphics[scale=0.41]{total_error_bear.eps}
\caption{Minimum pseudosection epsilon over time for each live bear collar.}\label{fig:epsstarlive}
\end{figure}

}

The most obvious trend in Figure \ref{fig:system} is a decrease in consistency radius over time.  This implies that the overall error in the measurements of the bear-human system decrease over time.  Because the consistency radius for the measurements of the humans alone (Figure \ref{fig:human} does not show this trend while the measurements of bear alone (Figure \ref{fig:bears}) does, this implies that there is a disagreement between the various measurements of the bear, but not the humans.   This indicates that there is an overall improvement in agreement between measurements as the experiment progresses.  In fact, as subsequent sections will show, the VHF measurements degrade with increasing range.  As the experiment progressed, the humans approached the location of the bear, and so the error in the VHF measurements decreased.  As the humans approach the bear, this means that the VHF measurements are are more consistent with the other measurements.  This tends to reduce the consistency radius since small consistency radius means that the data and model are in agreement. 

On the other hand, the reader is cautioned that consistency radius does not and cannot tell you \emph{what} the measurements are, nor does it tell you which measurements are more trustworthy.  However, the consistency radius does not presuppose any distributions, nor does it require any parameter estimation in order to indicate agreement.  As the next few sections indicate, the consistency filtration does a good job of blaming the sensor at fault for the time variation seen in Figures \ref{fig:system} -- \ref{fig:human}, again with minimal assumptions.  Strictly speaking, trying to use the sensors to estimate some additional quantity is an aggregation step that brings with it additional assumptions.  Using the consistency radius alone avoids making these assumptions.

The reader may ask why considering the time variation of consistency radius is more useful than some aggregate statistic, like its maximum value.  However, taking the max (or any other extremal statistic) is poor statistical practice, and worthy of being avoided on those grounds!  Timeseries are almost always better at expressing system behaviors, and in our case it happens to reveal a clear trend.  

\subsection{Example: Bear N024}

We now consider in detail the readings for Bear N024, recorded on
January 6, 2016. First we show (some of) the raw data
readings. \fig{Collar_N024_Track_Log_Points} shows a sample of the
first twelve data points for the car GPS $V$, showing latitude,
longitude, and altitude. \fig{Collar_N024_Collar_Data} shows the
readings from the bear GPS $B$, including UTM and elevation. Finally,
\fig{Collar_N024_Data_Collection} shows a combination of the text
reading $T$ and the radiocollar $R$, including the five variables
UTM, elevation, and bearing and range to the bear. Ranges in
incremenents of meters according to Table \ref{rangetable}.

\mypng{.5}{Collar_N024_Track_Log_Points}{Vehical GPS readings $V$ for
Bear N024.}

\mypng{.5}{Collar_N024_Collar_Data}{Bear position $B$ for Bear N024.}

\mypng{.5}{Collar_N024_Data_Collection}{Readings for text location $T$
and radio receiver $R$ for Bear N024.}

\begin{table}

\begin{center}

\begin{tabular}{r|r}
Distance code	& Distance (m)	\\
\hline
2	&1500	\\
3	&1000\\
4	&750\\
5	&500\\
6	&375\\

\end{tabular}
\caption{Ranges in meters for each distance code.}
\label{rangetable}

\end{center}

\end{table}

Since the sensors read out on different time scales, interleaving of
the data records is required, as illustrated in \fig{interp} for the first
18 integrated readings, encompassing 909 seconds. \fig{minutewise}
shows a similar record, now compacted and sampled at the minute
interval, showing repeated entries for sensors across all time
stamps.%
	\rem{	\todofromto{Cliff}{Paul and Katy}{This is good/better, but
complicated.  \fig{interp} shows the interpolation issue well, but is
regretably only in the ``cooked'' data of UTM and after Google maps,
rather than before. \fig{minutewise} is Katy's later version. It's
logically subsequent, and {\em could} also show ''raw''. I'm also
leaving in the old \fig{n024_data} for comparison. It has advantages
for compactness as well. We should discuss.}}
	A map is shown in \fig{Data_Both_N024}, with the red dots showing the position of the bear’s GPS collar, and the blue dots and lines showing the measured positions of the human, including $V, T$, and $R$. And finally, the time
sequence of the consistency radius is shown in \fig{n024_consistency},
where data points are adorned with the addresses visited over time, on
a general basis, along with a linear trendline.

\rem{
	\todofromto{Cliff}{Katy}{I'm also showing your new one in
\fig{ConsistencyRadius_N024}, but it doesn't match \ldots and I'd love
to also ``register'' the text positions to it as I did in \fig{n024_consistency}.}
}

\mypng{.5}{interp}{Time-integrated sensors streams. Variables $V,T,R$,
and $B$ as in the model, with Theta being the offset to the bear on
the VHF receiver (see \equ{theta}), $x.N$ being northing, and $x.E$
being easting in UTM.}

\mypng{.4}{minutewise}{Combined data record for collar N024
interpolated to the minute.}


\mypng{.5}{Data_Both_N024}{Map of sensor readings for bear
N024. }

\myeps{2}{n024_consistency}{Consistency radius $\epsilon^*$ over time
for bear N024.}

\rem{

\mypng{.5}{ConsistencyRadius_N024}{Consistency radius $\epsilon^*$ over time
for bear N024.}

}

\subsection{Example: Minute 5.41} \label{fivefourone}

\rem{
\todofromto{Cliff}{Paul and Katy}{Yes, I know this can/should be
updated, but there are issues. See below.}
}

We now explicate the concept of a consistency filtration over a single
set of measurements for N024 bear, specifically the point from minute
5.41, when the sensor readings shown on the left side of
\fig{assignment} were recorded, as shown in the sheaf model. After
this raw data is processed through the sheaf model functions shown in
\fig{raw_2002}, the resulting processed sensor readings are then shown
on the right side of \fig{assignment}, along with the resulting spread
variance (meters) measures shown on non-vertex faces in blue.

\rem{

\todofromto{Cliff}{Katy}{As we know, \fig{assignment} may be
depcrecated, compared to the data in \fig{minutewise}. While I can
make new data labels, I can't easily identify the new variances to
show on the faces (blue numbers). Changes would then also need to flow
down, which I can do if needed.} 

\begin{figure}[H]
\centering
\includegraphics[scale=0.41]{final_ex_data.eps}
\end{figure}

}

\myeps{1.2}{assignment}{(Left) A raw assignment on collar
N024. (Right) Processed data and spread measures (variance in m, in
blue). }

The consistency radius for this measurement is $\epsilon^* = 464.5$;
this is the maximum variance recorded amongst the sensors, in
particular between the receiver $R$ and the GPS on the bear $G$. Our
consistency parameter will range over $\epsilon \in [0,\epsilon^*] =
[0,464.5]$ over a series of steps, or landmark values, $1 \le i \le
\ell$ of the filtration. At each step the following occur:

\begin{itemize}

\item There is a landmark non-zero consistency value $0 < \epsilon_i
\le \epsilon^*$ which does not exceed the consistency radius;

\item There is a prior set of consistent faces $\Gamma_{i-1}$;

\item A new consistent face $\gamma \in \Delta$ is added so that
$\Gamma_i = \Gamma_{i-1} \un \{ \gamma \}$;

\item There is a corresponding vertex cover $\Lambda_i$, which is a
coarsening of the prior $\Lambda_{i-1}$;

\item And which has a cover measure $\bar{r}(\Lambda)$.

\end{itemize}

These steps for our example are summarized in Table~\ref{filter}, and
explained in detail below.

\begin{description}

\item[$\epsilon_0=0$:] If we insist that no error be tolerated, that is
that all data be consistent, then any nontrivial set in the vertex
cover, produced by Theorem \ref{max cover}, cannot contain nontrivial
faces of $\Delta$. As such, the set of consistent faces are just the
singletons $\Gamma_0 = \{ \{ V \}, \{ R \}, \{ T \}, \{ G \} \}$, and
the vertex cover is $\Lambda_0 = \{\{T, G\}, \{V, G\}$, $\{R\}\}$ with
$\bar{r}(\Lambda_0) = 2/11$.

\item[$\epsilon_1=9.48$:] If we relax our error threshhold to the next
landmark value, while still well below our consistency radius, the
readings on $V$ and $R$ are considered consistent within this
tolerance, so that the face $Y = \{V,R\}$ is added, yielding the new
set of consistent faces 
	\[ \Gamma_1 = \Gamma_0 \un \{ Y \} = 
		\{ \{ V \}, \{ R \}, \{ T \}, \{ G \}, \{ V, R \} \}.	\]
	The new vertex cover is $\Lambda_1 = \{\{V, R\}, \{V, G\}, \{G,
T\}\}$, with cover measure $\bar{r}(\Lambda_1)=3/11$.

\item[$\epsilon_2=15.9$:] Continuing on, next $T$ and $R$ come into
consistency, adding the face $Z = \{ R,T \}$, yielding 
	\[ \Gamma_2=\{ \{ V \}, \{ R \}, \{ T \}, \{ G \}, \{ V, R \}, \{
T, R \} \},	\]
	\[	\Lambda_2 = \{\{V, R\}, \{V, G\}, \{G, T\}, \{R, T\}\},	\quad
		\bar{r}(\Lambda_2)=4/11.	\]

\item[$\epsilon_3=18.42$:] The next landmark introduces the three-way
interaction $H = \{ V, T, R \}$ (for notational simplicity just note
that $\Gamma_3=\Gamma_2 \un \{ H \}$). But, the vertex cover is
unchanged, yielding $\Lambda_3 = \Lambda_2$ and $\bar{r}(\Lambda_3) =
\bar{r}(\Lambda_2)=4/11$.

\item[$\epsilon_4=20.35$:] Next $T$ and $V$ are reconciled, adding
$X=\{V, T\}$ to $\Gamma_4$. Now $\Lambda_4 =\{\{V, T, R\}, \{V, T,
G\}\}$, with $\bar{r}(\Lambda_4)=7/11$.

\item[$\epsilon_5 = \epsilon^* = 464.5$:] Finally we arrive at our
consistency radius with the bear collar $G$
being reconciled to $R$ adding the face $\{B,G\}$ to $\Gamma_5$. Our
vertex cover is naturally now the coarsest, that is, just the set of
vertices $\Lambda_5=\{\{V, T, R, G\}\}$ as a whole, with
$\bar{r}(\Lambda_5)=1$. 

\end{description}

These results are summarize in Table \ref{filter}, while
\fig{filtration} shows a plot of the cover coarseness as a function of
consistency, and adorned by the {\em combination} of sensors engaged at
each landmark. It is evident that the most discrepency, by far, among
sensors comes from the Radio VHF Device and Bear Collar when reporting
on the bear's location.

\begin{table}[H]
\centering \small
\begin{tabular}{rlcc}
$\epsilon$ & New Consistent Face & Vertex Cover                                 & Measure  \\ \hline
0.00       &                     & $\{\{T, G\}, \{V, G\}, \{R\}\}$              & $2/11$   \\
9.48       & $Y = \{V,R\}$           & $\{\{V, R\}, \{V, G\}, \{G, T\}\}$           & $3/11$   \\
15.90      & $Z = \{R,T\}$           & $\{\{V, R\}, \{V, G\}, \{G, T\}, \{R, T\}\}$ & $4/11$   \\
18.42      & $H = \{V,T,R\}$         & $\{\{V, R\}, \{V, G\}, \{G, T\}, \{R, T\}\}$ & $4/11$   \\
20.35      & $X = \{V, T\}$          & $\{\{V, T, R\}, \{V, T, G\}\}$               & $7/11$   \\
464.50     & $B = \{R, G\}$          & $\{\{V, T, R, G\}\}$                         & $1$
\end{tabular}

\caption{Consistency filtration of our example assignment.}

\label{filter}

\end{table}

\myeps{2.5}{filtration}{Consistency vs.\ cover coarseness for example
assignment.}

\fig{filtration} shows the filtration for a single time stamp,
$t=5.41$, with each of the five sensor combinations possible in the
simplicial complex appearing at a certain point in the filtration,
with, in particular, the $RG$ interaction for the bear sensors
highlighted. \fig{FaceGranularity_N024} shows these across all time steps for bear
N024. We can now see that $RG$ isn't always the least consistent
sensor readings, with, at times, $VR$ and $VT$ also leaping to the top
position over a number of time intervals. 

\mypng{.6}{FaceGranularity_N024}{Consistency filtration values for entire N024
trajectory.}

\rem{

\myeps{.5}{n024}{Consistency filtration values for entire N024
trajectory.}

}

\rem{

One might wonder then, for a given face, what the distribution of
measures corresponding to when that face becomes consistent looks
like. This distribution is provided in \fig{fig7} for each nontrivial
face.

\begin{figure}[ht]
  \begin{subfigure}[b]{0.5\linewidth}
    \centering
    \includegraphics[width=0.75\linewidth]{figureVT.eps}
    \caption{$\{V,T\}$}
    \label{fig7:a}
    \vspace{4ex}
  \end{subfigure}
  \begin{subfigure}[b]{0.5\linewidth}
    \centering
    \includegraphics[width=0.75\linewidth]{figureVD.eps}
    \caption{$\{V,D\}$}
    \label{fig7:b}
    \vspace{4ex}
  \end{subfigure}
  \begin{subfigure}[b]{0.5\linewidth}
    \centering
    \includegraphics[width=0.75\linewidth]{figureTD.eps}
    \caption{$\{T,D\}$}
    \label{fig7:c}
  \end{subfigure}
  \begin{subfigure}[b]{0.5\linewidth}
    \centering
    \includegraphics[width=0.75\linewidth]{figureVTD.eps}
    \caption{$\{V,T,D\}$}
    \label{fig7:d}
  \end{subfigure}
   \begin{subfigure}[b]{0.5\linewidth}
    \centering
    \includegraphics[width=0.75\linewidth]{figureDC.eps}
    \caption{$\{D,C\}$}
    \label{fig7:d}
  \end{subfigure}
  \caption{Distribution of measures corresponding to when a face becomes consistent.}
  \label{fig7}
\end{figure}

Confirming our observation above, the most discrepency among sensors comes from the Radio VHF Device and Bear Collar when reporting on the bear's location. That is, in the pseudosection plots above, $\epsilon$ almost always comes from the edge informing the bear's location. One might wonder then what the pseudosection $\epsilon$ looks like if we restrict ourselves to the human subcomplex. These values are plotted below.

\begin{figure}[H]
\centering
\includegraphics[scale=.35]{human_error_collar.eps}
\caption{Minimum pseudosection epsilon over time for each dummy collar.}
\end{figure}

\begin{figure}[H]
\centering
\includegraphics[scale=0.35]{human_error_bear.eps}
\caption{Minimum pseudosection epsilon over time for each live bear
collar. }
\end{figure}

Note the large spike in the measurement spread that occurs around minute 45 for dummy collar 17199. Upon further investigation,  the gps reading coming from the the VHF tracker was very inconsistent at that time; It placed the human in the middle of a lake. Other than this inconsistency, we note that the sensors reporting on the human's location show much better agreement.

}

\section{Comparison Statistical Model} \label{kalman}


Sheaf modeling generally, and especially as introduced in this paper,
is a novel approach to  multi-sensor fusion and target
tracking. As such, it behooves us to consider alternate, or standard,
approaches which analysts may bring to bear on the question of both
predicting locations and quantifying uncertainty around them. For
these purposes, we have also performed in parallel a statistical model
of the bear tracking data. Specifically, a dynamic linear model (DLM) is
introduced, whose parameters are then estimated using a Kalman
filter. A comparision of these results with the sheaf model is
instructive as to the role that sheaf modeling can play in the
analytical landscape as {\em generic} integration models.

Although we perform a comparison between these two approaches, we suggest to the reader that these two approaches are complementary.  Consistency radius and filtration do not and cannot tell what the measurements are, nor does it report which measurements are more trustworthy.  So while we show that comparison with a Kalman filter's covariance estimates is apt, comparing consistency radius with the Kalman filter's position estimates is not appropriate.  The Kalman filter provides an additional aggregation stage to the process, but one that also brings additional assumptions along with it.  If those assumptions are not valid, the process may produce faulty results.

A DLM is a natural statistical approach for
representing the bear/human tracking problem. A DLM separately models
the movements of the bear, movements of the human, and the multiple
observations on each. These model components are combined based upon a
conditional independence structure that is standard for hidden Markov
models \cite{macdonald_hidden_1997}. Once stochastic models of the movements and of the observations are obtained, then multiple standard computational approaches are available to use the model and the data to:
\begin{enumerate}
	\item Estimate the locations, with corresponding uncertainties,
over time of the bear collar and the human (see Figures~\ref{KFestsb}
and \ref{KFestsh})
	\item Estimate accuracy parameters of the involved sensors based on each of the single runs of the experiment (see Table~\ref{params})
	\item Combine information across the multiple experimental runs to estimate the accuracy of the sensors (see equation~\ref{eqn:pool})
\end{enumerate}
The Kalman Filter (KF) \cite{harvey_state_2004,meinhold_understanding_1983} is a standard computational approach for estimation with a DLM from which the above three items can be computed. Monte Carlo approaches for estimating parameters in DLMs are also available \cite{arulampalam_tutorial_2002}. And this type of model is also called State Space modeling \cite{harvey_state_2004}, and Data Augmentation - depending on the technical community. Similar tracking approaches are given in numerous references \cite{luo_multiple_2014,pearson_kalman_1974,siouris_tracking_1997}.

\subsection{High-level state and observation equations - with examples}
The two sets of equations below are a schematic for the DLM that supports bear and human tracking. First - the state equations that represent the bear and human locations are:
\begin{align*}
\text{bear}_{t+\Delta t} &= \text{bear}_t + \omega_{b,\Delta t, t +
	\Delta t} \\
\text{human}_{t+\Delta t} &= \text{human}_t + \omega_{h, \Delta t, t+\Delta t}
\end{align*}
The state equations represent both the locations and the movements of
the bear and the human over time. In this particular set of state
equations the locations of each of the bear and human are modeled as
random walks that are sampled at non-equispaced
times. The fact that $\Delta t$ here is not constant is an important sampling feature: the measurements are not all taken at equal time intervals. The parameters to be set, and/or estimated are embedded in the two $\omega$ terms. These terms are stochastically independent Gaussian random variables. The covariance matrices for the $\omega$'s are parameters to be set or estimated. Intuitively, a larger variance corresponds with potentially more rapid movement. Finally, these state equations are motivated by those used for general smoothing in \cite{wahba_spline_1990}.

The schematic observation equations are:
\begin{align*}
\text{truck gps}_{t} &= \text{human}_t + \tgps \\
\text{vhf gps}_{t} &= \text{human}_t + \vhfgps \\
\text{streetsign}_{t} &= \text{human}_t + \street \\
\text{bear gps}_t &= \text{bear}_t + \bgps \\
\text{vhf}_t &= \text{human}_t - \text{bear}_t + \vhfrange
\end{align*}
These equations connect directly with the state equations, and
represent a full set of observations at time $t$. If, as is typical
for this data (see Figure~\ref{interp}), there is not a complete set
of information at a particular time, then the observation equations
are reduced to reflect only those measurements that have been
made. The KF calculations for estimating bear/human locations along
with sensor accuracies can proceed with that irregularly sampled
data. The computations for collar N024 are in two dimensions (see
Figures~\ref{KFestsb} and \ref{KFestsh} below), but for conciseness the equations shown for the KF are in one dimension.

A general form for a DLM that shows the update from time $t-\Delta t$
to time $t$ is:
\begin{eqnarray}
\label{eqn:state}  \qt &=& \Gt \qtmone + \wt \\
\label{eqn:obs}  \byt &=& \Ht \qt + \et \quad\quad   t=1, 2, 3, \ldots
\end{eqnarray}
where $\Gt$ is an identify matrix. The base state model is a random walk – and so the ‘`best guess’' is centered on the previous location. The covariance of $\wt$ is structured as a diagonal matrix with diagonals proportional to the elapsed time $\Delta t$.  
The parameters that are needed to be chosen to calculate a DLM are the
variances of the 'impetus' $\wt$ (which depends on $\Delta t$)and the variances of the
observation errors $\et$. The model assumes that the means and correlations of these are zero. The one-dimensional bear equations map to
this abstract representation as:
\begin{align*}
\qt &= \begin{pmatrix} \text{bear}_t \\  \text{human}_t \end{pmatrix}	\\
\byt &=
\begin{pmatrix}
\text{truck gps}_{t} \\
\text{vhf gps}_{t} \\
\text{streetsign}_{t} \\
\text{bear gps}_t \\
\text{vhf}_t
\end{pmatrix}
\end{align*}
with corresponding expansions for $\wt$ and $\et$.

The data are structured so that it rarely if ever occurs that all 5
measurements are taken at the same time. For a particular time at
which an observation is available the observation equations are
subsetted down to the observation or observations that are
available. This can be seen in the 'time interweave' plot for one of
the experiment runs. This is represented in the general form of the
DLM below (equations \ref{eqn:state} and \ref{eqn:obs}) by varying the matrix $\Ht$ with corresponding variations in
$\et$. For complete observations at
time $t$ the matrix $\Ht$ and $\et$ would respectively be:
\begin{gather}
\Ht = \begin{pmatrix}
0 & 1 \\
0 & 1 \\
0 & 1 \\
1 & 0 \\
-1 & 1
\end{pmatrix},
\qquad
\et = \begin{pmatrix}
\tgps  \\
\vhfgps \\
\street \\
\bgps \\
\vhfrange
\end{pmatrix}
\label{eqn:matrix}
\end{gather}
The variance parameters are:
\begin{gather*}
\bsigma = \begin{pmatrix}
\sigma_{\text{truck gps}} \\
\sigma_{\text{vhf gps}} \\
\sigma_{\text{streetsign}} \\
\sigma_{\text{bear gps}} \\
\sigma_{\text{vhf}}
\end{pmatrix}
\end{gather*}
Correspondingly, if only the VHF data are available then $\Ht$ and $\et$ would respectively be:
\begin{gather*}
\begin{pmatrix} -1 & 1 \end{pmatrix} \quad
\begin{pmatrix} \vhfrange \end{pmatrix},
\end{gather*}
and if only the bear collar GPS and truck GPS data are available then $\Ht$ and
$\et$ would respectively be:
\begin{gather*}
\begin{pmatrix} 1 & 0 \\ -1 & 1 \end{pmatrix} \quad
\begin{pmatrix} \bgps \\ \vhfrange  \end{pmatrix}.
\end{gather*}

\subsection{Estimation of parameters}
The parameters are estimated via the Kalman filter. That calculation is
representable as a Bayesian update as follows. At time $t-1$ the
knowledge of the state parameter, informed by all the data up to and
including that time, is represented as:
\[
\hat\btheta_{t-1} \mid \by_{t-1}
\sim N ( \hat\btheta_{t-1} , \boldsymbol{\Sigma}_{t-1} )
\]
This equation is read as 'the distribution of $\hat\btheta_{t-1}$ given
the data $\by_{t-1}$ is Gaussian with a mean of $ \hat\btheta_{t-1}$ and
a covariance matrix $ \boldsymbol{\Sigma}_{t-1}$'. The Kalman filter  is typically
written in two steps: a forecast followed by an update. 
With
$\bresid_t = \by_t - \bH_t \bG_{\Delta t} \hat \btheta_{t-1}$,
$\bV_t$ the covariance of $\bepsilon_t$, and
$\bW_t$ the covariance of $\bomega_t$, then 
the forecast step is:
\[
\hat\btheta_{t} \mid \by_{t-1}
\sim N ( \bG_{\Delta t} \hat\btheta_{t-1} , \bR_t = \bG_{\Delta t} \boldsymbol{\Sigma}_{t-1} \bG_{\Delta t}^\prime + \bW_t )
\]
and the update step is:
\[\hat\btheta_t \mid \by_t \sim N ( \bG_{\Delta t} \hat\btheta_{t-1} + \bR_t \bH_t^\prime (\bV_t
+ \bH_t \bR_t \bH_t^\prime ) \bresid_t , \bR_t - \bR_t \bH_t^\prime (\bV_t + \bH_t \bR_t
\bH_t^\prime)^{-1} \bH_t \bR_t )
\]
The above steps are
executed sequentially by passing through the data items in time order.

The parameters in the DLM may be chosen as values which maximize the
likelihood function. The set of variance
parameters includes those for the 5 observation forms and the state
equations - we denote this collection by $\bsigma$.
The values $\bsigma$ can be estimated from the data using as the values that maximize the
{\em likelihood function}. The
likelihood function is
\[ L(\by_1 , \ldots, \by_n \| \bsigma ) = \prod_{i=1}^n p(\by_i \| \bsigma, \by_{i-1}, \ldots, \by_1 ) \]
The terms in the product are available as a byproduct of a KF
calculation. Table~\ref{params} shows the estimated parameters for collar N024. These were estimated from the collar data, using the KF computation and the maximum likelihood approach. As expected, the estimated parameter for the VHF antenna measurement is larger than the other observation parameters. Also, the human (in the vehicle) is seen to move more quickly than the bear. The table shows the parameters as standard deviations so that the units are interpretable. 

\rem{
		  \todofromto{Cliff}{Paul and Michael}{Yes, but what about
sensor {\em combinations}? If not, then can this aspect be drawn out
more fully below in either comparisons or conclusion?} 
	 \todofromto{Michael}{Paul}{I had asked you much the same thing.  I think your observation matrix $\Ht$ does this by grouping off different rows.  But I'd love to hear it written by you rather than me!} 
	 \todofromto{Paul}{Michael and Cliff}{Paul says:
Perhaps this is a point of comparison –
Easily obtained from the KF – estimates of the variability of individual sensors
Easily obtained from the Sheaf – the info regarding the measurement combinations
	   }
}

The KF structure supports pooling the models and measurements across the 12 experiments. This has the potential to provide more informed estimates of the sensor errors (as represented by the variances) compared with the single experiment based estimates shown in Table~\ref{params}.
The information from these series can be pooled to arrive at estimates of
$\bsigma$ under the assumption that the variances that correspond with
the 5 observations are the same across experiments. In that case
the pooled likelihood function is:
\begin{equation}
\prod_{j=1}^{12} L_j (\by_j \| \bsigma)
\label{eqn:pool}
\end{equation}
And the estimates of $\bsigma$ can be estimated as those which
maximize the pooled likelihood. 
Another case is to take advantage of the series corresponding with the
cases where the bear collar was stationary by setting that variance
component to zero in the state update. The remaining variances can be
estimated for those series with that assumption - essentially reducing
the dimensionality of the optimization problem.

\subsection{Example outputs}

\fig{KF-Human_N024} is a map showing output for the DLM for the human
position plotted against the measured
position, with the KF estimated human positions shown in red, and the direct measurements of human position shown in blue.
Figure~\ref{KFestsb} then shows example output for the location estimates
of the bear, and \fig{KFestsh} for the human. These plots also show estimated 95\% confidence intervals for the locations. 
Human position refers to the estimated human position from the sensors at specific times. Both the sheaf and the KF output location estimates for the bear and the human. The x-axis is time(in seconds) from the first observation. The estimates for  bear locations are shown as green circles for both the 'east' and 'north' coordinates. The location units are UTM on the y-axis. The observed locations for the bear collar are from the GPS measurements. Two standard deviation intervals that were estimated are shown as gray lines. The implementation of the KF that we used was set to output a location estimate plus uncertainty for each time that an observation in the overall system was made. The jumps in the estimated values are seen to primarily shift when a bear collar GPS observation is made. Jumps seen  elsewhere correspond with large shifts in the related data - which in this case includes the VHF measurement and its link with the human's position. 

\mypng{.5}{KF-Human_N024}{Output for the DLM for the human
position (red) plotted against the measured position (blue). }

\begin{table}

\begin{center}
\begin{tabular}{l||r}
\hline\hline
	Parameter standard deviation & Value (meters) \\ \hline\hline
	Bear state update	&0.008 \\
	Human state update &	26.524 \\ \hline
	Bear GPS Obs. &16.091 \\
	VHF GPS Obs. &0.032 \\
	Vehicle GPS Obs. &	91.434 \\
	Street sign Obs. &	27.597 \\
	VHF Obs. &	663.998 \\
\hline\hline
\end{tabular}		
\caption{Estimated KF parameters for collar N024. The first two rows are the standard deviation estimates for the $\omega$ parameters. The remaining rows contain standard deviation estimates for the $\epsilon$ parameters.}
\label{params}
\end{center}
\end{table}

\begin{figure}[H]
\includegraphics[scale=.5]{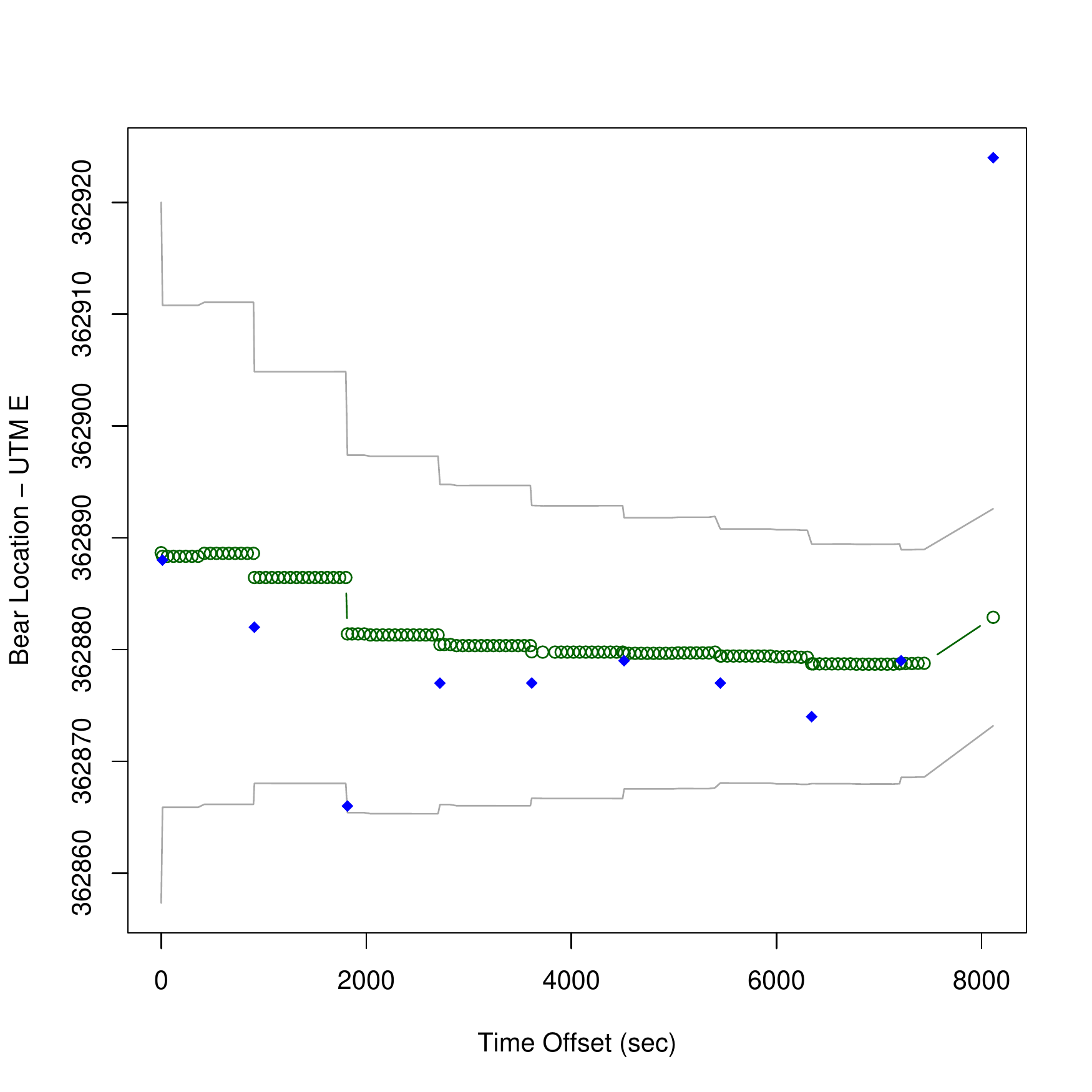}
\includegraphics[scale=.5]{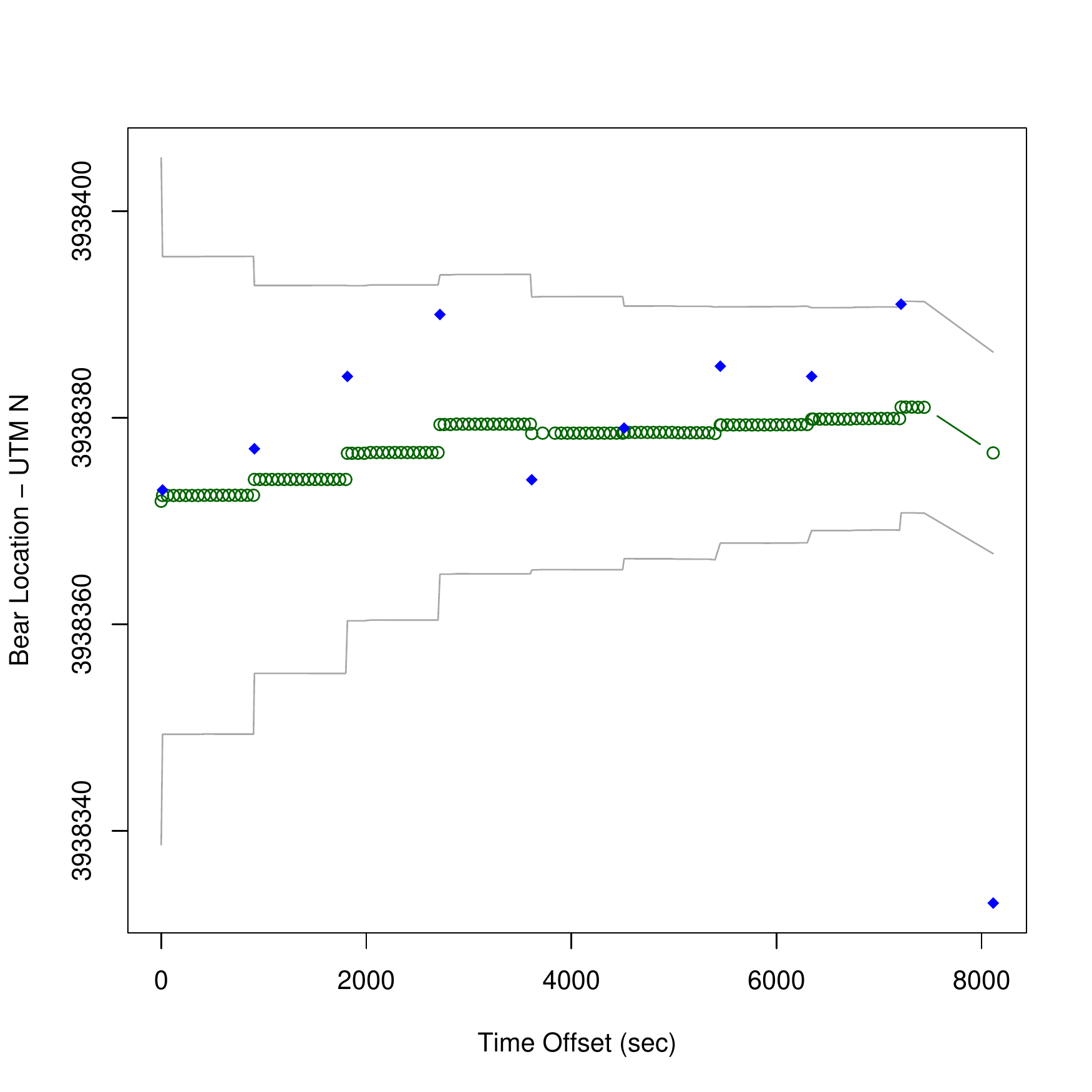}

\caption{KF Estimates of locations of the  bear for N024.  (Estimates of location are shown in green, the GPS data is shown in blue, and the confidence intervals are shown in gray.)}
\label{KFestsb}
\end{figure}

\begin{figure}[H]
\includegraphics[scale=.5]{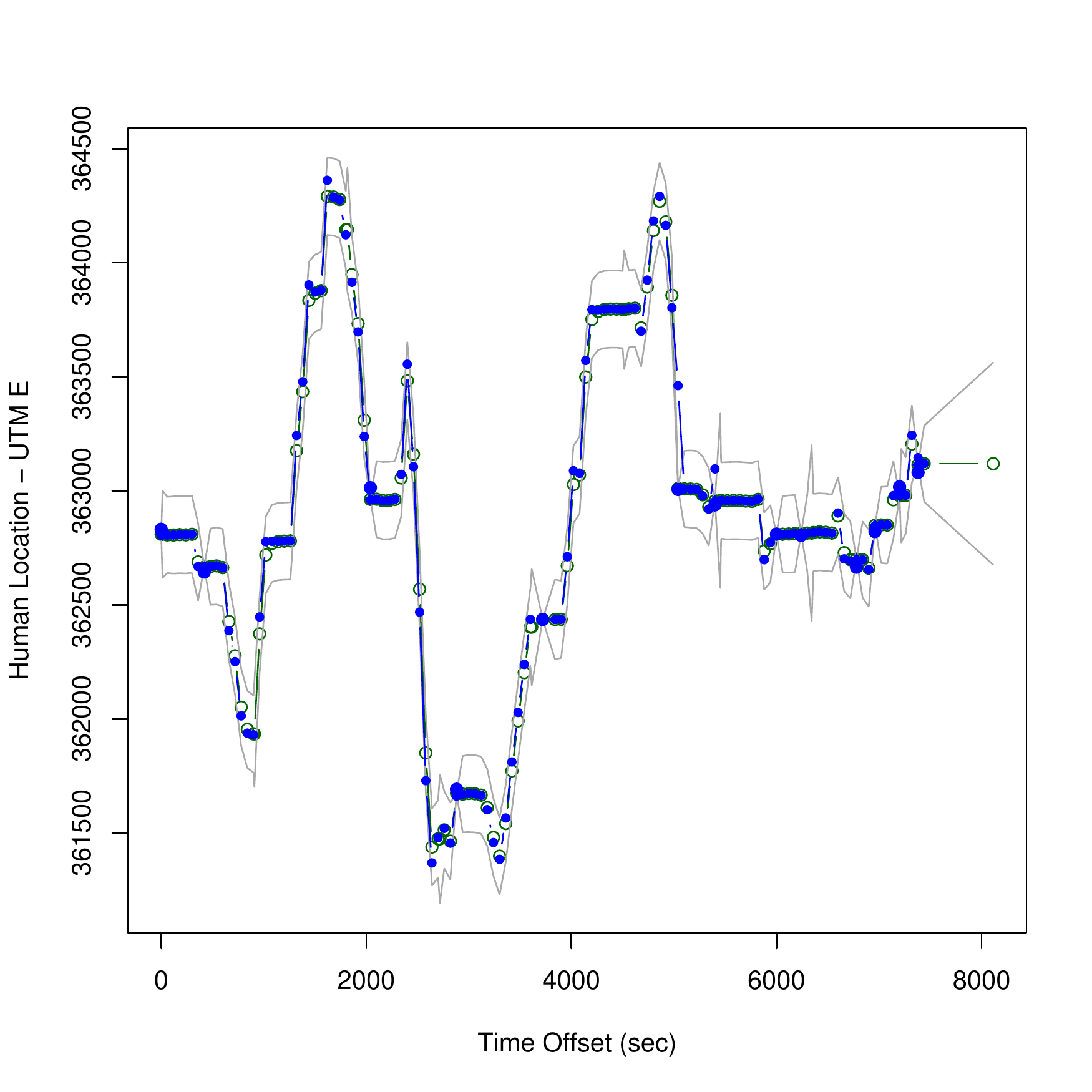}
\includegraphics[scale=.5]{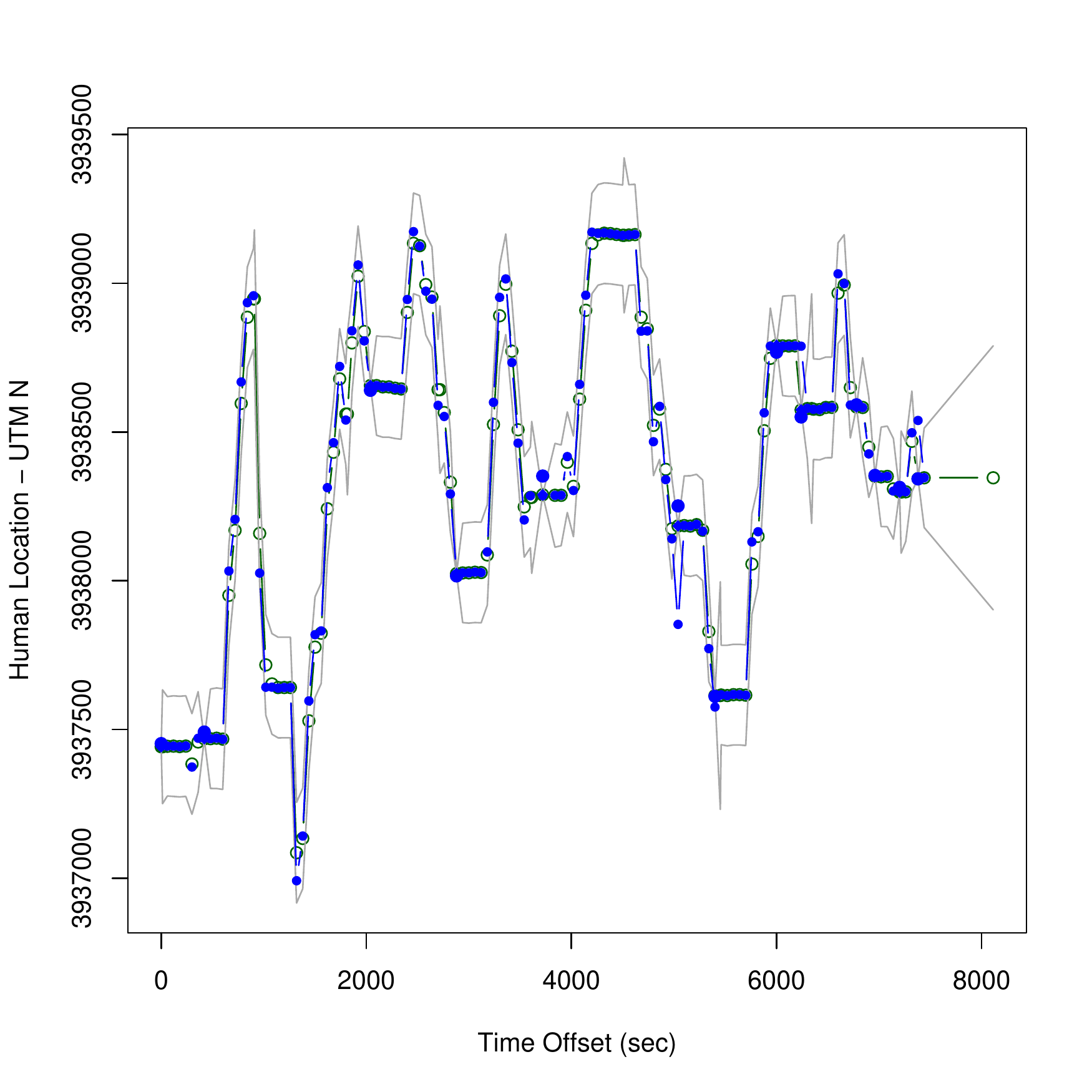}

\caption{KF Estimates of locations of thehuman for N024.  (Estimates of location are shown in green, the GPS data is shown in blue, and the confidence intervals are shown in gray.)}
\label{KFestsh}
\end{figure}

The estimates for human location are also shown in green, gps data for the human (truck gps) in blue, and confidence intervals in gray. Again - there are more data than shown in the plot, so the jumps in the estimated locations that do not directly correspond with data points on the plots are driven by other data in the data set.

\subsection{Comparisons}

The \emph{processed} sheaf model and the dynamic linear model of our example capture the same phenomenon, and even do so in relatively similar ways.  Because of the canonicity of the sheaf modeling approach, any other sensor integration model must recapitulate aspects of the sheaf model.  After loading processed data into the sheaf, the state space that governs the dynamic linear model is precisely the space of global sections of the sheaf model.  If we do not register the text reports or the angular measurements from the VHF receiver into a common coordinate system, then while the sheaf model is still valid, the data are no longer related linearly.  Without this preprocessing, the dynamic linear model is not valid, and we conclude that the sheaf carries more information.  Therefore, all of the following comparisons are made considering only observations registered into a common coordinate system.

The observation equations of the dynamic linear model (represented by the matrix $\Ht$) can therefore be ``read off'' the restriction maps of the sheaf.  The two models differ more in choices made by the modelers: (1) the sheaf model treats uncertainty parametrically, while the dynamic linear model does so stochastically, (2) the sheaf model processes timeslices independently, without a dynamical model.  We address these two differences in some detail after first addressing their similar treatment of the state space and observational errors.

While the space of assignments for the sheaf $\Sh$ defined in our example is quite large, the space of global sections is comparatively small.  It is parameterized by the four real numbers, specifically the east and north positions of the human and the bear.  This is precisely the space of state variables $\qt$ at each timestep.  On the other hand, the $\yt$ matrix of observations consists of values within an assignment ot the vertices.  The observation matrix $\Ht$ is an aggregation of all of the corresponding restriction maps, deriving the values of a global section at each vertex.  The observational errors $\et$ are then the differences between the observations we actually made (an assignment), and those predicted from the state variables (a global section).  Thus the consistency radius is merely the largest of the compoents of $\et$.

\rem{
\todofromto{Cliff}{Paul}{I'm actually rather confused, after bringing
in all the relevants maps you sent: this is
KF\_and\_Sheaf\_Bear\_N024.png showing what looks like sheaf bear in
purple and KF bear in red. But your doc say KF bear is dark goldenrod,
and wherease KF human is red. Also above I'm not seeing the colors
maching for \fig{KF-Human_N024}. What I'd really like is three maps:
each of system, human, and bear; and each showing KF, sheaf, and
data. Not that ``system data'' necessarily makes sense, and I also
have ambiguity for ``data'' for human and bear as well, see above.}
}

\fig{KF_and_Sheaf_Bear_N024} is a map showing sheaf bear location estimates in purple and the DLM and bear location estimates in red. \fig{Sheaf-Human_N024}  shows the human sheaf location estimates in purple and the measured locations for humans in blue. 
The most significant difference between the proposed sheaf-based model
and dynamic linear model of sensor-to-sensor relationships is their
treatment of uncertainty.  The dynamic linear model assumes that the
data are subject to a stochastic model, which determine independent
Gaussian random variables for the positions of the bear and human.
The sheaf model posits a deterministic model for these independent
variables, but permits (and quantifies) violations of this model.  The
two models are in structural agreement on how to relate groups of
non-independent variables.  Both models assert exact agreement of the
three observations of the human position, and that the VHF
observations ought to be the vector difference between bear and human
position.  Specifically,  the observation matrix $\Ht$
(\equ{eqn:matrix}) and the sensor matrix determining the base space for the sheaf (Table \ref{matrix}) are identical up to sign  (the difference in signs is due to differences in the choice of basis only).  In particular, the vector difference for relating the VHF observations to bear and human positions is encoded in the restriction map $\Sh(R \ar B)$ shown in the right frame of \fig{raw_2002} and also in the last row of the $\Ht$ matrix.  

\mypng{.5}{KF_and_Sheaf_Bear_N024}{Comparison of the DLM (red) and sheaf model (purple) location estimates for the bear.}

\mypng{.5}{Sheaf-Human_N024}{Comparison of the sheaf (purple) and measured data (blue) for the human location.}

While the two models treat relationships between variables identically, we have chosen to treat the individual sensors differently with respect to time.  The sheaf model is not intended to estimate sensor \emph{self} consistency directly.  Sensor self consistency can only be estimated in the sheaf by way of comparison against other sensors.  In contrast, the dynamic linear model works through time (equations \ref{eqn:state} and \ref{eqn:obs}) to estimate self consistency from the sensor's own timeseries.  Indeed, the dynamic linear model separates observational errors $\et$ from those arising from impetus $\wt$, which can be thought of as deviations from the baseline, and can estimate relative sensor precisions.  It should be noted that this distinction could be made using a sheaf by explicitly topologizing time, though this was not the focus of our study.  

Considering the specific case of Collar N024, the difference between the VHF-estimated and GPS-estimated bear positions determines the consistency radius.  This is shown by the presence of $RG$ being the largest value in \fig{FaceGranularity_N024} for most time values.  Because of this, baseline comparisons of the observational error the VHF sensor computed by the dynamic linear model in Table \ref{params} broadly agrees with the typical consistency radius shown in \fig{n024_consistency}.

Focusing on the specific time discussed in Section \ref{fivefourone},
namely $t$ = 5.41 minutes = 325 seconds as shown in \fig{filtration},
it is clear from the sheaf analysis that the majority of the error is
due to the difference between the VHF observation and the bear collar
GPS.  These inferences are not easily drawn from the dynamic linear
model output in Figures~\ref{KFestsb} and \ref{KFestsh}, but the difference between the quality of these sensors is shown in Table~\ref{params}.

Delving more finely into the timeseries, the consistency radius for
the bear-human system (\fig{n024_consistency}) shows a steady decrease as the human approaches the bear, leading to improved VHF readings.  This is also visible in  \fig{KFestsb}, in which the error variance steadily decreases for the bear location estimates.

The three sensors on the human show more subtle variations, as is
clear  from the orange curve on the left side of \fig{fig:human} and \fig{KFestsh}.  In both, there is a marked increase in human error near $t$ = 75 minutes = 4500 seconds.  Consulting the consistency filtration \fig{filtration}, this is due primarily to the consistency radius being determined by the edge $VR$ in the base space of the sheaf -- a difference between the GPS receiver on the vehicle and the GPS receiver attached to the VHF receiver.  Shorter occurances of the same phenomenon occur near times $t$ = 15 minutes = 900 seconds and $t$ = 32 minutes = 1920 seconds.

The dynamic linear model posits that changes in the state are governed by an impetus vector $\wt$, while our sheaf model is essentially not dynamic.  While it may seem that these two models are at variance with one another, they are easily rectified with one another.  Considering only the global sections of $\Sh$, which is the state space of the dynamic linear model, we build a sheaf over a line graph, namely
\rem{
\begin{equation*}
  \xymatrix{
    \dotsb  & \mathbb{R}^2 \ar[r]^{\Gt} \ar[l]^{I} & \mathbb{R}^2 & \mathbb{R}^2 \ar[r]^{\Gt} \ar[l]^{I} & \dotsb
    }
\end{equation*}
}
in which $I$ represents the identity matrix.  Global sections of this sheaf consist of timeseries of the internal state $\qt$ with no impetus, while an assignment to the vertices corresponds to a timeseries of internal state \emph{with} impetus.  Thus again, the consistency radius measures the maximum impetus present.

\section{Conclusion} \label{conclusion}

Sheaves are a mathematical data structure that can assimilate heterogeneous data types, making them ideal for modeling a variety of sensor feeds. Sheaves over abstract simplicial complexes provide a global picture of the sensor network highlighting multi-way interactions between the sensors. As a consequence of the foundational nature of sheaves, any multi-sensor fusion method can be encoded as a sheaf, though it may well be preferable to use a more traditional expression of such a method.  Additionally, the theory provides algorithms for assessing measurement consistencies.

In this article, we demonstrated how one might use sheaves for tracking wildlife with a collection of sensors collecting types of information. We saw that sheaves can be used to produce a holistic temporal picture of the observations, and can describe the levels of agreement between different groups of sensors.

In comparison with standard tracking algorithms, the sheaf model was able to provide error measurements without any upfront parameter estimation.  For instance,  sheaves and dynamic linear modeling encode dependence between observations starting from the same information -- a relation between sensors and state variables.  While broadly in quantitative agreement on our example data, these two methods format this information differently.  These differences result in certain inferences being easier to make in one or the other framework.  For instance, because dynamic linear models require parameter estimation, sensor self-consistency is easier to compute as a consequence.  Conversely, since the sheaf model does not require parameter estimation, this simplifies the process of making fine-grained inferences about which groups of sensors provide consistent observations.

\section*{Acknowledgements}

This work was partially funded under the High Performance
Data Analytics (HPDA) program at the Department of Energy’s
Pacific Northwest National Laboratory. Pacific Northwest
National Laboratory is operated by Battelle Memorial
Institute under Contract DE-ACO6-76RL01830.

This work was partially funded by the Pittman Robertson Federal Aid to Wildlife Restoration Grant and was a joint research project between the North Carolina Wildlife Resources Commission (NCWRC) and the Fisheries, Wildlife, and Conservation Biology (FWCB) Program at North Carolina State University (NCSU). We thank the homeowners who granted us permission and access to their properties. We thank numerous other staff from the NCWRC and the FWCB program at NCSU for their ongoing assistance and support.

\bibliography{bearbib}

\end{document}